\documentclass[10pt,aps,pra,twocolumn,floatfix,nofootinbib,superscriptaddress,longbibliography]{revtex4-2}
\usepackage{placeins}

\usepackage{bm,graphicx,mathrsfs,amsmath,amssymb,mathtools,,makecell,bbm, amsthm, dsfont,color,times,txfonts,nicefrac,framed,enumitem,tikz,physics,wrapfig,amsfonts,tcolorbox,lipsum,nicematrix,multirow}
\usepackage[ruled,vlined, linesnumbered]{algorithm2e}

\usepackage[colorlinks=true,linkcolor=refcolor,citecolor=refcolor]{hyperref}

\hypersetup{urlcolor=refcolor}
\definecolor{equationcolor}{RGB}{108,153,224}
\definecolor{refcolor}{RGB}{135,147,216}
\definecolor{changescolor}{RGB}{188, 104, 104}

\makeatletter
\def\blfootnote{\gdef\@thefnmark{}\@footnotetext}
\makeatother

\newcommand{\ms}[1]{\textsf{#1}}

\newcommand{\iden}{\mathbbm{1}}

\newcommand{\E}[1]{\mathcal{E}}

\def\E{ {\ms E} }

\newtheorem{thm}{Theorem}

\newtheorem{lem}[thm]{Lemma}
\newtheorem{prop}[thm]{Proposition}
\newtheorem{cor}[thm]{Corollary}

\newtheorem{defn}{Definition}

\begin{document}
\title{Predicting Magic from Very Few Measurements}

\author{J. M. Varela}
    \affiliation{Physics Department, Federal University of Rio Grande do Norte, Natal, 59072-970, Rio Grande do Norte, Brazil}
    \affiliation{International Institute of Physics, Federal University of Rio Grande do Norte, 59078-970, Natal, RN, Brazil}
\author{L. L. Keller}
    \affiliation{International Institute of Physics, Federal University of Rio Grande do Norte, 59078-970, Natal, RN, Brazil}
\author{A. de Oliveira Junior}
	\affiliation{Center for Macroscopic Quantum States bigQ, Department of Physics,
Technical University of Denmark, Fysikvej 307, 2800 Kgs. Lyngby, Denmark}
\author{D. A. Moreira}
    \affiliation{School of Science and Technology, Federal University of Rio Grande do Norte, Natal, Brazil}
\author{R. Chaves}
    \affiliation{International Institute of Physics, Federal University of Rio Grande do Norte, 59078-970, Natal, RN, Brazil}
    \affiliation{School of Science and Technology, Federal University of Rio Grande do Norte, Natal, Brazil} 
\author{R. A. Macêdo}
    \affiliation{Physics Department, Federal University of Rio Grande do Norte, Natal, 59072-970, Rio Grande do Norte, Brazil}
    \affiliation{International Institute of Physics, Federal University of Rio Grande do Norte, 59078-970, Natal, RN, Brazil}
\date{\today}

\begin{abstract}
The nonstabilizerness of quantum states is a necessary resource for universal quantum computation, yet its characterization is notoriously demanding. Quantifying nonstabilizerness typically requires an exponential number of measurements and a doubly exponential classical post-processing cost to evaluate its standard monotones. In this work, we show that nonstabilizerness is, to a large extent, in the eyes of the beholder: it can be witnessed and quantified using \emph{any} set of $m$ $n$-qubit Pauli measurements, provided the set contains anti-commuting pairs. We introduce a general framework that projects the stabilizer polytope onto the subspace defined by these observables and provide an algorithm that estimates magic from Pauli expectation values with runtime exponential in the number of measurements $m$ and polynomial in the number of qubits $n$. By relating the problem to a stabilizer-restricted variant of the quantum marginal problem, we also prove that deciding membership in the corresponding reduced stabilizer polytope is NP-hard. In particular, unless $\mathrm{P} = \mathrm{NP}$, no algorithm polynomial in $m$ can solve the problem in full generality, thus establishing fundamental complexity-theoretic limitations. Finally, we employ our framework to compute nonstabilizerness in different Hamiltonian ground states, demonstrating the practical performance of our method in regimes beyond the reach of existing techniques.
\end{abstract}

\maketitle

\section{Introduction}
A central problem in quantum information science is to identify and quantify the resources that allow quantum systems to outperform their classical counterparts~\cite{nielsen2010quantum, wilde2013quantum}. In the context of quantum computation, this distinction is sharply captured by the theory of stabilizer states and operations~\cite{gottesman1997stabilizer}. While stabilizer states may exhibit large amounts of entanglement, the Gottesman–Knill theorem guarantees their efficient classical simulability~\cite{gottesman1998heisenberg, aaronson2004improved}, so they are insufficient for universal quantum computation. The essential ingredient behind quantum advantage is then called nonstabilizerness, or magic, which has therefore emerged as a fundamental resource for fault-tolerant quantum computation~\cite{bravyi2005universal} as well as in other quantum information processing tasks~\cite{gu2024pseudomagic,cao2025gravitational,zamora2025semi,junior2025athermality,loio2025quantum}.

Despite its importance, the characterization of nonstabilizerness remains well known to be difficult. At the heart of the problem lies the complex geometry of the stabilizer polytope, the convex hull of all stabilizer states~\cite{Veitch_2014}. Determining whether a given quantum state lies inside or outside this polytope or quantifying its distance from it poses two major bottlenecks. The first bottleneck is experimental. Most existing measures and witnesses of nonstabilizerness rely on complete state tomography, which requires estimating $4^n$ expectation values for an $n$-qubit system and is therefore infeasible beyond small system sizes. The second bottleneck is computational. Even assuming full tomographic data were available, deciding whether a state lies inside the stabilizer polytope is itself an intractable task: the number of its vertices grows exponentially with the number of qubits. As a consequence, existing approaches must rely on strong symmetry assumptions, highly structured families of states, or severe dimensional restrictions~\cite{heinrich2019robustness, timsina2025robustness}. This exponential overhead contrasts with the scale of current quantum platforms, which already operate at--and in some cases beyond--hundreds of qubits~\cite{Arute2019,Zhong2020,Ebadi2021,Wu2021}.

Together, these obstacles severely limit our ability to detect and quantify nonstabilizerness in large quantum systems, and raise a fundamental question: Can one quantify the degree of nonstabilizerness of a quantum state using only a limited number of measurements, while simultaneously avoiding computationally intractable optimization over the full stabilizer polytope? In this work, we answer this question in the affirmative. Inspired by the philosophy of classical shadows~\cite{huang2020predicting}, we show that it is possible to capture essential information about nonstabilizerness by probing only a low-dimensional subspace of the full Hilbert space. Rather than reconstructing the entire quantum state, we project the stabilizer polytope onto this subspace and study the resulting reduced marginal polytope. Remarkably, this projection preserves enough geometric structure to allow for faithful detection and quantification of nonstabilizerness, while drastically reducing both measurement and computational complexity.

We develop a near-optimal algorithm to construct such projected stabilizer polytopes using a number of measurements that scales polynomially with the system size. This approach simultaneously addresses the two aforementioned bottlenecks: it replaces full tomography with a shadow-like measurement scheme, and it reduces the complexity of the convex optimization problem by working with a significantly smaller polytope. It is characterized by the commutativity graph of Pauli measurements, similar to the ideas developed in~\cite{gokhale2019minimizing,chapman2020characterization,mann2025graph,xu2025simultaneous}. We illustrate the general framework with several explicit examples, demonstrating that our method reliably quantifies nonstabilizerness in regimes far beyond the reach of existing techniques.

The paper is organized as follows. In Sec.~\ref{sec:stab}, we review the essentials of stabilizer resource theory and the robustness of magic (RoM), the central monotone of nonstabilizerness employed throughout this work. In Sec.~\ref{sec:reduxstab}, we introduce the reduced stabilizer polytope and present a near-optimal algorithm for its construction, together with the associated nonstabilizerness monotone. We also provide a detailed analysis of several reduced polytopes of particular interest. In Sec.~\ref{sec:stabmargprob}, we address the computational complexity of the problem, proving that even for reduced stabilizer polytopes there exists no polynomial-time algorithm to decide whether a given set of correlations lies inside the polytope. In Sec.~\ref{sec:hamilton}, we apply our general framework to selected cases of interest, computing the nonstabilizerness of ground states of several Hamiltonians using only the local observables required to determine their energy. Finally, in Sec.~\ref{sec:discuss}, we summarize our findings and highlight directions for future research.

\section{Stabilizer resource theory and monotones}
\label{sec:stab}

In this section, we review the basic ingredients underlying the definition of stabilizer states and summarize the relevant resource monotones. We consider systems of $n$ qubits, described by the Hilbert space $\mathcal H = (\mathbb{C}{^2)^{\otimes n}}$. We denote by $\mathcal P_n$ the $n$-qubit Pauli group, defined as the group generated by tensor products of single-qubit Pauli operators, including overall phases. When convenient, we will refer to $\tilde{\mathcal P}_n$ as the Pauli group modulo phases, i.e., the Pauli group without signs.

A stabilizer group $\mathcal S \subset \mathcal P_n$ is an Abelian subgroup of the Pauli group such that $-\iden \notin \mathcal S$. If $\mathrm{rank}(\mathcal S)=n$, the group $\mathcal S$ uniquely defines a pure stabilizer state $|\psi^{\mathcal S}\rangle$ as the joint $+1$ eigenstate of all its elements, that is,
\begin{equation}
    s\ket{\psi^\mathcal{S}} = +\ket{\psi^\mathcal{S}} \:\: \forall s\in\mathcal{S}.
\end{equation}
Because pure stabilizer states are in one-to-one correspondence with stabilizer groups of maximal rank, we denote both sets by $\mathrm{stab}_n$.

The set of (possibly mixed) stabilizer states is defined as the convex hull of pure stabilizer states,
\begin{align}
\mathrm{STAB}_n &\equiv \mathrm{conv}(\mathrm{stab}_n) \nonumber\\
&= \left\{\sum_{\mathcal S \in \mathrm{stab}_n} p_{\mathcal S}\,\psi^{\mathcal S} \;\middle|\; \vec p \in \Delta_{|\mathrm{stab}_n|}\right\},
\label{eq:stabpolytope}
\end{align}
where $\Delta_{|\mathrm{stab}_n|} = \{\vec p \in \mathbb R^{|\mathrm{stab}_n|} \mid p_{\mathcal S} \geq 0,\; \sum_{\mathcal S} p_{\mathcal S}=1\}$ denotes the probability simplex and $\psi^{\mathcal S} = |\psi^{\mathcal S}\rangle\!\langle\psi^{\mathcal S}|$. The set $\mathrm{STAB}_n$ thus forms a polytope in the space of quantum states; the above characterization as the convex hull of finitely many extremal points is known as its $V$-representation~\cite{ziegler1995lectures}.

There exists a natural class of unitaries that preserve the stabilizer polytope and form a group, known as the Clifford group. These unitaries are defined as those that conjugate Pauli operators into Pauli operators,
\begin{equation}
    \mathcal C\ell_n \!\equiv\! \left\{ U \in U(\mathcal H) \middle| \forall P \in \mathcal P_n,\; \exists P^\prime \in \mathcal P_n \text{ s.t. } U P U^\dagger \propto P^\prime \right\} .
\end{equation}
A quantum channel $\mathcal E$ that preserves $\mathrm{STAB}_n$—that is, $\mathcal E(\psi) \in \mathrm{STAB}_n$ for all $\psi \in \mathrm{STAB}_n$—is called a \emph{stabilizer operation}. Any such channel can be decomposed into a sequence of Clifford unitaries, computational-basis measurements, and classical post-processing. Together, the stabilizer polytope and the set of stabilizer operations define the stabilizer resource theory~\cite{Veitch_2014}, in which $\mathrm{STAB}_n$ constitutes the set of free states and stabilizer operations are the free operations.

A generic quantum state $\psi$ is said to be \emph{nonstabilizer} if $\psi \notin \mathrm{STAB}_n$. Nonstabilizerness constitutes the resource of the stabilizer resource theory, in direct analogy with entanglement, coherence, nonlocality, and other quantum resources~\cite{chitambar2019quantum}. To quantify this resource, one introduces a \emph{stabilizer monotone}, namely a real-valued function $f$ on quantum states satisfying the following properties: \emph{(i)} all stabilizer states attain the same value $f_0$, and \emph{(ii)} $f(\mathcal E(\psi)) \leq f(\psi)$ for all stabilizer operations $\mathcal E$. We say that $f$ is \emph{faithful} if $f(\psi) \geq  f_0$ for all states, with equality if and only if $\psi \in \mathrm{STAB}_n$.

Various stabilizer monotones have been proposed in the literature. These include measures based on the trace distance~\cite{junior2025geometric} and the relative entropy~\cite{Veitch_2014}, which are generic constructions for convex quantum resource theories. Such monotones can typically be expressed as
\begin{equation}
    f(\psi) = \min_{\sigma \in \mathrm{STAB}_n} D(\psi, \sigma)\;,
\end{equation}
where $D(\cdot,\cdot)$ denotes a convex distance measure (or, in the case of concave entropic measures, an analogous expression in which the optimization is a maximization over stabilizer states). For simpler, non-optimizing entropic quantifiers, one must restrict to the pure-state stabilizer resource theory~\cite{leone2022stabilizer}, as is also common in the resource theory of entanglement~\cite{nielsen2010quantum}.

By Eq.~\eqref{eq:stabpolytope}, optimization over stabilizer states requires $|\mathrm{stab}_n| =2^{\Theta(n^2)}$~\cite{aaronson2004improved} data even to specify $\sigma \in \mathrm{STAB}_n$ in the $V$-representation. Moreover, computing the monotone itself typically requires tomographically complete information about the state $\psi$—as is the case for both the relative entropy~\cite{Veitch_2014} and the trace distance~\cite{junior2025geometric}, for example—entailing an experimental overhead scaling as $O(4^n)$.

In this work, our monotone of interest will be the \emph{Robustness of Magic} (RoM), defined for a quantum state $\psi$ as~\cite{howard2017application}
\begin{equation}
    \mathrm{RoM}(\psi) = \min_{\vec x} \left\{ \| \vec x\|_1 \;\middle|\; \psi=\sum_{\mathcal S \in \mathrm{stab}_n}x_\mathcal S \psi^\mathcal S \right\}\;, 
    \label{eq:RoM}
\end{equation}
where the vector $\vec x \in \mathbb R^{|\mathrm{stab}_n|}$ specifies a \emph{stabilizer pseudo-mixture} of $\psi$, and $\| \vec x\|_1 = \sum_{\mathcal S \in \mathrm{stab}_n}|x_\mathcal S|$ is referred to as its \emph{negativity}. Since $\tr \psi=1$, the coefficients necessarily satisfy $\sum_{\mathcal S \in \mathrm{stab}_n}x_\mathcal S =1$, implying the lower bound $\mathrm{RoM}(\psi)\geq 1$. The Robustness of Magic is a faithful stabilizer monotone~\cite{howard2017application}, that is, the minimum value is attained if and only if $\psi$ is a stabilizer state, and the minimization ensures monotonicity under all stabilizer operations.

In the original work, it was shown that the evaluation of the RoM can be formulated as a linear program (LP),
\begin{align}
     \mathrm{RoM}(\psi) =&\min_{\vec x} \| \vec x\|_1\;, \nonumber\\     &\:\:\mathrm{s.t.}\;\;  \mathbf A \vec x = \vec b \;,
\end{align}
where the elements of the rectangular matrix $\mathbf A$ are given by $A_{P,\mathcal S} = \tr(P \psi^\mathcal S)$ and the vector $\vec b$ is defined as $b_P = \tr(P \psi)$. Despite this reformulation, the problem remains plagued by the same fundamental limitations discussed above: in its current form, it requires an exponential amount of experimental data and computational resources to be evaluated.

Beyond its role as a resource monotone, the RoM also quantifies the classical simulability of quantum states via quasiprobability methods~\cite{pashayan2015estimating}. In particular, it was shown in~\cite{howard2017application} that expectation values of the form $\tr(P \mathcal E(\psi))$ can be estimated within additive error $\delta$ and failure probability $\epsilon$ in $\mathrm{poly}(n)$ time, provided that
\begin{equation}
    N \geq \frac{2}{\delta^2}\mathrm{RoM}(\psi)^2 \ln \left(\frac{2}{\epsilon}\right),
    \label{eq:RoMsamplecomp}
\end{equation}
independent samples are drawn from the probability distribution defined by $p_\mathcal S = |x_\mathcal S|/\|\vec x\|_1$.

In the next section, we turn to the projection of the stabilizer resource theory in settings where tomographic completeness is not available. In particular, we discuss how to construct suitable projections of the stabilizer polytope and define the corresponding RoM measure.

\section{Reduced stabilizer polytope and robustness}
\label{sec:reduxstab}

In realistic experimental settings, access to tomographically complete data is rarely available. Instead, measurements are typically constrained by the physical architecture of the platform, calibration requirements, and noise considerations, resulting in access to only a limited set of observables. These observables are often fixed in advance, for instance, by native measurement bases, locality constraints, or readout fidelities, and may scale at most polynomially with the system size. As a consequence, any characterization of nonstabilizerness that fundamentally relies on full state tomography is incompatible with the operational regime of contemporary quantum devices. This motivates the following shift in perspective: rather than attempting to reconstruct the full quantum state, we ask what can be inferred about nonstabilizerness from partial information alone. In particular, given expectation values of a restricted set of experimentally accessible observables, can one still certify and quantify nonstabilizerness in a meaningful and resource-efficient way? Addressing this question requires rethinking the geometry of the stabilizer polytope under projection onto lower-dimensional subspaces, which is the approach we will develop in the following. The key idea behind it is to construct a \emph{projected stabilizer polytope}.

We begin by assuming that the experimenter has access to only $m$ Pauli operators, collected into the set
\begin{equation}
    \mathcal M = \{P_1, P_2, \cdots, P_m\}\;,
\end{equation}
where we always take $m \leq \mathrm{poly}(n)$. Given a quantum state $\psi$, we may represent it by its Pauli expectation values,
\begin{equation}
    \vec{\psi}\equiv (\langle P \rangle_\psi )_{P \in \tilde{\mathcal P}_n} \in [-1,1]^{4^n}\;.
\end{equation}
In this representation, the stabilizer polytope can be written as
\[
\mathrm{conv}\!\left(\vec{\psi}^{\mathcal S}\right)_{\mathcal S \in \mathrm{stab}_n} \subset [-1,1]^{4^n}\;.
\]
Denoting by $\Pi_\mathcal M$ the projection that maps a vector $\vec v \in [-1,1]^{4^n}$ onto the components corresponding to the observables in $\mathcal M$, we define the \emph{projected stabilizer polytope} as the image of $\mathrm{STAB}_n$ under this projection,
\begin{equation}
    \mathrm{STAB}_n(\mathcal M)\equiv \Pi_\mathcal M[\mathrm{STAB}_n]\;.
\end{equation}
Clearly, if $\mathcal M = \tilde{\mathcal P}_n$, one recovers the full stabilizer polytope, i.e., $\mathrm{STAB}_n(\tilde{\mathcal P}_n)=\mathrm{STAB}_n$.

A noteworthy property of this construction is that the resulting projected polytope depends only weakly on the total number of qubits. In particular, we have the following result.
\begin{lem}
    Let $\{P_1, P_2, \cdots, P_m\}$ be a set of $n$-qubit Pauli operators, and let $n^\prime \geq n$. Then
    \begin{equation}
        \mathrm{STAB}_n(\{P_i\}_{i=1}^{m}) = \mathrm{STAB}_{n^\prime}\big(\{P_i \otimes \mathbb I_{n^\prime-n}\}_{i=1}^{m}\big)\;,
    \end{equation}
    where the identity acts on the remaining $n^\prime-n$ qubits.
    \label{lem:weakdepstab}
\end{lem}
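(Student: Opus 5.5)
We prove the two inclusions separately. Both projected polytopes are convex hulls of the projected pure stabilizer states, since projection is linear and commutes with taking convex hulls. It therefore suffices to compare the sets of extreme points. Concretely, we show two things: every vector $(\langle P_i\rangle_{\psi})_i$ with $\psi\in\mathrm{stab}_n$ is realized by some $n'$-qubit stabilizer state; and every vector $(\langle P_i\otimes\mathbb I\rangle_{\phi})_i$ with $\phi\in\mathrm{stab}_{n'}$ lies in $\mathrm{STAB}_n(\{P_i\})$.

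For ``$\subseteq$'': given $\psi^{\mathcal S}\in\mathrm{stab}_n$, take $\phi=\psi^{\mathcal S}\otimes|0\rangle\!\langle 0|^{\otimes(n'-n)}$. This state is a pure stabilizer state, with stabilizer group generated by $\mathcal S\otimes\mathbb I$ together with $Z_j$ on each extra qubit. We have $\tr[(P_i\otimes\mathbb I)\phi]=\tr[P_i\psi^{\mathcal S}]$. Hence each extreme point of the left-hand side is an extreme point of the right-hand side, and convexity gives the inclusion.

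For ``$\supseteq$'': given $\phi\in\mathrm{stab}_{n'}$, we have $\tr[(P_i\otimes\mathbb I)\phi]=\tr[P_i\,\tr_{n'-n}\phi]$. So it suffices to show that the reduced state $\rho=\tr_{n'-n}\phi$ lies in $\mathrm{STAB}_n$. This is the main step.

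To establish it, I would invoke the standard fact that partial trace is a stabilizer operation. Discarding qubits can be realized by measuring them in the computational basis and forgetting the outcome, and the post-measurement states are stabilizer states. The paper lists measurements and classical post-processing among the stabilizer operations preserving $\mathrm{STAB}$, and I would lean on that.

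If a more self-contained argument is wanted, I would use the explicit form of a stabilizer state's marginal. The reduced state is
\begin{equation}
\rho=\frac{1}{2^n}\sum_{s\in\mathcal S_A}s,
\end{equation}
where $\mathcal S_A$ is the subgroup of elements of the stabilizer group acting trivially on the discarded qubits, restricted to the first $n$ qubits. This operator is proportional to a projector onto a stabilizer code space. That projector equals the uniform mixture of the $2^{n-k}$ pure stabilizer states obtained by completing $\mathcal S_A$, which has rank $k$, to a maximal group: add $n-k$ independent commuting generators and average over all their sign choices. So $\rho\in\mathrm{STAB}_n$.

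Mapping through $\Pi$ then yields the reverse inclusion, and equality follows. I expect the only real work to be this verification that marginals of stabilizer states are mixed stabilizer states; everything else is linearity of projection and trace.
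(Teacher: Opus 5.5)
Your proof is correct and has the same structure as the paper's. For $\supseteq$ you use exactly its argument: the subgroup of elements acting trivially on the extra qubits gives a normalized code projector, hence a mixed stabilizer state. For $\subseteq$ you pad with $|0\rangle\langle 0|^{\otimes(n'-n)}$, whereas the paper uses the non-full-rank group $\mathcal S\otimes\mathbb I$ (maximally mixed padding); yours is slightly more elementary because it gives a pure stabilizer state directly. One wording fix: you need only that each side's projected pure stabilizer states (generating points, not necessarily extreme points) lie in the other polytope, and that is what your argument actually shows.
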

The proof is given in Appendix~\hyperref[proof:weakdepstab]{A-1}. This result suggests that the projected stabilizer polytope depends only on the algebraic relations among the Pauli operators in $\mathcal M$, rather than on their explicit tensor-product embedding into the full Hilbert space. We will make this intuition precise in the next subsection.

Since specifying the measurement set $\mathcal M$ already fixes the number of qubits required to realize the corresponding observables, we will henceforth suppress the explicit dependence on $n$ and write $\mathrm{STAB}(\mathcal M) \equiv \mathrm{STAB}_n(\mathcal M)$. We further denote by $\mathrm{stab}(\mathcal M) \subseteq \mathrm{STAB}(\mathcal M)$ a generating set for the $V$-representation of the projected polytope, namely
\begin{equation}
\!\!\!\mathrm{STAB}(\mathcal M) \!=\! \mathrm{conv}[\mathrm{stab}(\mathcal M)] \!\equiv\! \mathrm{conv}(\vec{v}_1, \vec{v}_2, \cdots, \vec{v}_{|\mathrm{stab}(\mathcal M)|})\;.
\end{equation}
Because the projection can map distinct stabilizer states to the same point—or render some points redundant via convex combinations—one generally has
$\mathrm{stab}(\mathcal M) \subsetneq \Pi_{\mathcal{M}}\mathrm{stab}_n$. For instance, it may occur that
\[
\Pi_\mathcal M \vec{\psi}^{\mathcal S}
=
\sum_{\mathcal S^\prime \neq \mathcal S} p_{\mathcal S^\prime}\,
\Pi_\mathcal M \vec{\psi}^{\mathcal S^\prime}\;.
\]
We postpone the explicit construction of the $V$-representation $\mathrm{stab}(\mathcal M)$ to the next subsection and treat it as an input for the remainder of the present discussion.

Observing correlations that lie outside the projected stabilizer polytope provides a bona fide witness of nonstabilizerness. Indeed, for a given state $\psi$,
\begin{equation}
    (\langle P_1\rangle_\psi, \dots, \langle P_m \rangle_\psi) \notin \mathrm{STAB}(\mathcal M)
    \;\Rightarrow\;
    \psi \notin \mathrm{STAB}_n\;.
    \label{eq:reducedstabwitness}
\end{equation}
While this criterion suffices for witnessing nonstabilizerness, it does not provide a quantitative measure. To go beyond mere detection, we seek an analogue of a stabilizer monotone that can be evaluated solely from the expectation values of the observables in $\mathcal M$.

To this end, we introduce a natural generalization of the Robustness of Magic in Eq.~\eqref{eq:RoM}, which we call the \emph{reduced Robustness of Magic},
\begin{align}
    \mathrm{RoM}_\mathcal M(\psi)
    =&\min_{\vec x \in \mathbb R^{|\mathrm{stab}(\mathcal M)|}} \| \vec x\|_1\;,
    \label{eq:reducedRoM}\\
    &\hspace{0.5cm}\mathrm{s.t.}\;\; \Pi_\mathcal M\vec{\psi}
    =
    \sum_{j=1}^{|\mathrm{stab}(\mathcal M)|}x_j \vec{v}_j\;, \nonumber\\
    &\phantom{\mathrm{s.t.}}\;\;
    \sum_{j=1}^{|\mathrm{stab}(\mathcal M)|}x_j = 1 \;.
\end{align}
As in the full RoM, this definition leads to a linear program, but now expressed exclusively in terms of the measurement data associated with $\mathcal M$.

The reduced Robustness of Magic satisfies the following properties.
\begin{lem}
    Given any set of $n$-qubit Pauli operators $\mathcal M = \{P_1, P_2, \cdots, P_m\}$ and a quantum state $\psi$, the projected robustness of magic $\mathrm{RoM}_\mathcal M$ obeys:
    \begin{enumerate}
        \item $\mathrm{RoM}_\mathcal M(\psi) \geq 1$, with equality if and only if there exists a stabilizer state $\phi \in \mathrm{STAB}_n$ such that $\langle P_i \rangle_\psi = \langle P_i \rangle_\phi$ for all $i \in \{1,2,\cdots,m\}$;
        \item $\mathrm{RoM}_\mathcal M(\mathcal E(\psi)) \leq \mathrm{RoM}_\mathcal M(\psi)$ for all trace-preserving stabilizer operations $\mathcal E$;
        \item $\mathrm{RoM}_\mathcal M$ is submultiplicative and convex, i.e.,
        $\mathrm{RoM}_\mathcal M(\psi \otimes \rho) \leq \mathrm{RoM}_\mathcal M(\psi)\,\mathrm{RoM}_\mathcal M(\rho)$ and
        $\mathrm{RoM}_\mathcal M\!\left(\sum_j p_j \psi_j\right) \leq \sum_j |p_j|\,\mathrm{RoM}_\mathcal M(\psi_j)$;
        \item $\mathrm{RoM}(\psi) \geq \mathrm{RoM}_\mathcal M(\psi)$.
    \end{enumerate}
\label{lem:reducedROMproperties}
\end{lem}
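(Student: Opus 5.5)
The plan is to establish each item directly from the linear program \eqref{eq:reducedRoM}, using two observations. First, the feasible set is nonempty: any full stabilizer pseudo-mixture $\psi=\sum_{\mathcal S}x_{\mathcal S}\psi^{\mathcal S}$ projects to $\Pi_\mathcal M\vec\psi=\sum_{\mathcal S}x_{\mathcal S}\Pi_\mathcal M\vec\psi^{\mathcal S}$. Second, each projected vertex $\Pi_\mathcal M\vec\psi^{\mathcal S}$ is a convex combination of the $\vec v_j$. Substituting these convex combinations, we get a feasible $\vec y$ for the reduced LP with $\sum_j y_j=\sum_{\mathcal S}x_{\mathcal S}=1$ and $\|\vec y\|_1\le\|\vec x\|_1$, by the triangle inequality. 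Taking the optimal $\vec x$ gives item 4 immediately, and it also shows that the reduced RoM is well defined.

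For item 1, the constraint $\sum_j x_j=1$ gives $\|\vec x\|_1\ge 1$. Equality holds if and only if every $x_j\ge0$, that is, if and only if $\Pi_\mathcal M\vec\psi\in\mathrm{conv}(\mathrm{stab}(\mathcal M))=\mathrm{STAB}(\mathcal M)=\Pi_\mathcal M[\mathrm{STAB}_n]$. By definition of the projection, this means some $\phi\in\mathrm{STAB}_n$ reproduces all $\langle P_i\rangle$. The minimum is attained because the problem is a finite-dimensional LP.

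For item 3, convexity follows by taking optimal decompositions $\vec x^{(j)}$ for each $\psi_j$. The combination $\sum_j p_j\vec x^{(j)}$ is feasible for $\sum_j p_j\psi_j$: it is linear in the data, and its coefficients sum to $\sum_j p_j=1$. Its $\ell_1$ norm is at most $\sum_j|p_j|\,\mathrm{RoM}_\mathcal M(\psi_j)$. For submultiplicativity I will take the measurement set on the composite system to be the product set $\{P\otimes Q\}$. The product of the optimal decompositions, $x_jy_k$ with vertices $\vec v_j\otimes\vec w_k$, is feasible, because products of stabilizer states are stabilizer states and the projection factorizes. Its norm is the product of the two norms. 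I would state explicitly which $\mathcal M$ is meant on each side, since the statement leaves this implicit.

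Item 2 is the main obstacle, because a general stabilizer operation $\mathcal E$ need not map the span of $\mathcal M$-data to itself. A state's $\mathcal M$-expectation values do not determine those of $\mathcal E(\psi)$. So the straightforward argument, which is to push a reduced decomposition through $\mathcal E$, fails. I would first try to interpret $\mathrm{RoM}_\mathcal M(\mathcal E(\psi))$ via the full-RoM route. Write $\psi=\sum x_{\mathcal S}\psi^{\mathcal S}$, so that $\mathcal E(\psi)=\sum x_{\mathcal S}\mathcal E(\psi^{\mathcal S})$, with each $\mathcal E(\psi^{\mathcal S})\in\mathrm{STAB}_n$. Projecting and using convexity shows that $\mathrm{RoM}_\mathcal M(\mathcal E(\psi))$ is at most the norm of any full decomposition of $\psi$ that is consistent with the observed $\mathcal M$-data. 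I would then argue that the relevant minimization ranges over states compatible with the $\mathcal M$-data, which recovers $\mathrm{RoM}_\mathcal M(\psi)$. If that interpretation is unavailable, I would restrict to operations for which $\Pi_\mathcal M\circ\mathcal E$ factors through $\Pi_\mathcal M$, such as Cliffords permuting $\mathcal M$ up to sign. In that case the reduced decomposition maps vertex-wise into $\mathrm{STAB}(\mathcal M)$, and the convexity argument of item 3 closes the proof.
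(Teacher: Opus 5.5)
\textbf{Items 1, 3, 4.} These are sound. Items 1 and 4 and the convexity part of item 3 match the paper's argument. For item 1 you use nonnegativity of the optimal weights. For item 4 you re-expand each $\Pi_{\mathcal M}\vec\psi^{\mathcal S}$ in the $V$-representation and apply the triangle inequality.

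For submultiplicativity, fixing the product measurement set $\{P\otimes Q\}$ and combining \emph{reduced} decompositions is a genuinely better route than the paper's. The paper starts from full pseudo-mixtures of $\psi$ and $\rho$. After optimizing, its argument therefore only bounds $\mathrm{RoM}_{\mathcal M}(\psi\otimes\rho)$ by $\mathrm{RoM}(\psi)\,\mathrm{RoM}(\rho)$. One small fix in your version: the points $\vec v_j\otimes\vec w_k$ lie in the product polytope, being projections of products of stabilizer states. They need not be vertices of its $V$-representation, so you need the same re-expansion step as in item 4.

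\textbf{Item 2 is false.} Your hesitation is justified: the claim cannot be proved, so your first route cannot be completed by any argument. Take one qubit and $\mathcal M=\{X,Z\}$, so that $\mathrm{STAB}(\mathcal M)=\{(x,z):|x|+|z|\le 1\}$.
\begin{itemize}
\item The pure state $\psi$ with Bloch vector $(0,\tfrac{1}{\sqrt{2}},\tfrac{1}{\sqrt{2}})$ has $\mathcal M$-data $(0,\tfrac{1}{\sqrt{2}})$. This lies inside the diamond, so $\mathrm{RoM}_{\mathcal M}(\psi)=1$.
\item The phase gate $S$ is Clifford, hence a trace-preserving stabilizer operation.
\item $S\psi S^\dagger$ has Bloch vector $(-\tfrac{1}{\sqrt{2}},0,\tfrac{1}{\sqrt{2}})$. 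Its $\mathcal M$-data satisfy $|x|+|z|=\sqrt{2}$, so $\mathrm{RoM}_{\mathcal M}(S\psi S^\dagger)=\sqrt{2}>1$.
\end{itemize}

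The paper's proof of item 2 is exactly your ``full-RoM route''. It shows $\mathrm{RoM}_{\mathcal M}(\mathcal E(\psi))\le\|\vec x\|_1$ for every full pseudo-mixture $\vec x$ of $\psi$, and then ``minimizes over $\vec x$''. But that minimum is $\mathrm{RoM}(\psi)$, not $\mathrm{RoM}_{\mathcal M}(\psi)$. Your proposed repair is to minimize instead over operators $\psi'$ that merely share the $\mathcal M$-data of $\psi$. This fails because $\mathcal E(\psi')$ and $\mathcal E(\psi)$ need not share $\mathcal M$-data, and the counterexample exploits precisely that.

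\textbf{What can be proved instead.} Two statements survive:
\begin{itemize}
\item For all trace-preserving stabilizer operations, $\mathrm{RoM}_{\mathcal M}(\mathcal E(\psi))\le\mathrm{RoM}(\psi)$. This is what the paper's argument actually establishes; it also follows from item 4 and monotonicity of the full RoM.
\item Your fallback: suppose $\Pi_{\mathcal M}\circ\mathcal E=L\circ\Pi_{\mathcal M}$ for an affine map $L$, for example a Clifford permuting $\mathcal M$ up to signs. Then $L$ automatically maps $\mathrm{STAB}(\mathcal M)$ into itself, as you see by applying it to projections of stabilizer states. Pushing an optimal reduced decomposition through $L$ and re-expanding gives $\mathrm{RoM}_{\mathcal M}(\mathcal E(\psi))\le\mathrm{RoM}_{\mathcal M}(\psi)$.
\end{itemize}
You should present item 2 in this corrected form rather than attempt the reinterpretation.
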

The proof is provided in Appendix~\ref{proof:reducedromproperties}. Properties~2 and~3 show that the reduced Robustness of Magic is indeed a valid stabilizer monotone, inheriting many of the structural features of the full Robustness of Magic~\cite{howard2017application}. The most significant insights, however, arise from properties~1 and~4.

Property~1 highlights that $\mathrm{RoM}_\mathcal M$ is generally \emph{not} faithful: the condition $\mathrm{RoM}_\mathcal M(\psi)=1$ guarantees only that there exists a stabilizer state reproducing the statistics of $\psi$ on the restricted measurement set $\mathcal M$. Consequently, depending on the choice of observables, nonstabilizer states may remain indistinguishable from stabilizer states at the level of accessible Pauli expectations. A precise characterization of this phenomenon will be developed in Sec.~\ref{sec:stabmargprob}.

Property~4 ensures that the measurement-limited robustness nonetheless provides a meaningful lower bound on the nonstabilizerness of the full state. In particular, it implies a corresponding lower bound on the classical simulation cost: by measuring only $m \leq \mathrm{poly}(n)$ Pauli operators, one can already infer a sample-complexity bound of the form
$N \geq f(\epsilon,\delta)\,\mathrm{RoM}_\mathcal M(\psi)^2$
for quasiprobability-based simulation, directly from Eq.~\eqref{eq:RoMsamplecomp}.

With the reduced Robustness of Magic, we thus resolve the first of the two central obstacles in nonstabilizerness estimation: the requirement of tomographically complete data. The monotone $\mathrm{RoM}_\mathcal M$ can be evaluated using any experimentally accessible set of measurements, albeit with a strength that depends on the choice of $\mathcal M$. However, the second obstacle, the computational complexity, still remains at this point. Indeed, evaluating $\mathrm{RoM}_\mathcal M$ still requires access to a $V$-representation of the projected stabilizer polytope.

A straightforward upper bound on the size of the linear program in Eq.~\eqref{eq:reducedRoM} is obtained via a direct ``top--down'' construction,
\begin{equation}
    \mathrm{stab}(\mathcal M) \subseteq \Pi_\mathcal M(\mathrm{stab}_n)\;,
    \label{eq:topdownVrep}
\end{equation}
namely by projecting all $|\mathrm{stab}_n| = 2^{\Theta(n^2)}$ vertices of the full stabilizer polytope. This approach makes clear that, in the worst case, evaluating the reduced RoM is no easier than computing the original RoM: although the $\ell_1$-norm minimization can be solved in time polynomial in the number of constraints~\cite{vaidya1989speeding}, the number of vertices remains exponentially large. In the following, we show that by exploiting the algebraic structure of the measurement set $\mathcal M$, one can construct \emph{compressed} $V$-representations of the projected stabilizer polytope, thereby addressing the remaining computational bottleneck.

\subsection{Bottom-up reconstruction}

We will now show that there is a purely combinatorial construction of the reduced polytopes. First, define the frustration graph~\cite{gokhale2019minimizing,chapman2020characterization,mann2025graph,xu2025simultaneous} of $\mathcal M$ as the graph $G_\mathcal M$ with $V(G_\mathcal M) =\mathcal M$, and edges given by
\begin{equation}
    E(G_\mathcal M) \equiv \{\{P_i, P_j\} \in \mathcal M^{\times 2}| P_i P_j=-P_jP_i\}\;.
\end{equation}
Then, we can show:
\begin{thm}
    Let $\mathcal M$ be a set of $m$ $n$-qubit Pauli strings. Then, the reduced polytope has a $V$-representation given by
    \begin{equation}
        \mathrm{STAB}(\mathcal M) = \mathrm{conv}\{\vec v_{S,f}\}_{S \in I_\mathrm{max}(G_\mathcal M)\;,\;f \in \mathrm B_S}\;,
        \label{eq:reducedstabindependentsetss}
    \end{equation}
    where $I_\mathrm{max}(G_\mathcal M)$ is the set of (maximally) independent sets of $G_\mathcal M$, and
    \begin{equation}
        \mathrm B_S =\left\{\left .f \in \{\pm\}^S\right| \prod_{P \in S^\prime} {f_P} P\neq -1, \forall S^\prime \subseteq S\right\}\;,
        \label{admissiblef}
    \end{equation}
    denotes all bitstrings assignments over the independent set $S$ that do not have a negative Pauli product on the subsets, and the vector $\vec v_{S,f}$ is defined as
    \begin{equation}
    (\vec  v_{S,f})_P =  
    \begin{cases}
f_P \; \mathrm{if\;} f_PP \in S\;,\\
0\;\mathrm{otherwise.}
    \end{cases}
    \label{eq:consistentvector}
    \end{equation}
    Furthermore, the set $\mathrm{stab}(\mathcal M) =\{\vec{v}_{S,f}\}$ has size:
    \begin{equation}
        |\mathrm{stab}(\mathcal M)| \leq 2^{\min\{n,m\}+\min\{cm,(1/2+o(1))n^2\}}\;,
        \label{eq:upperboundreducedstab}
    \end{equation}
     where $c=\log 3/3$. Without further restrictions on $\mathcal{M}$, this bound is tight and reached by the set of marginals
     \begin{equation}
         \mathcal M= \bigcup_{i=1}^m \{X_i,Y_i,Z_i\}\;.
     \end{equation}
     \label{thm:reducedstabdecomp}
\end{thm}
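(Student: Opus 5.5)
The plan has three parts: prove the polytope identity via two inclusions, then bound the vertex count, then show the bound is attained. Throughout we use Lemma~\ref{lem:weakdepstab} to add ancilla qubits whenever that is convenient.

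\emph{Step 1: every projected stabilizer state lies in the claimed hull.} Fix a pure stabilizer state $\psi^{\mathcal S}$. For any Pauli $P$, $\langle P\rangle_{\psi^{\mathcal S}}=\pm1$ if $\pm P\in\mathcal S$, and $0$ otherwise. The set $T=\{P\in\mathcal M: \pm P\in\mathcal S\}$ is pairwise commuting, because $\mathcal S$ is abelian. So $T$ is an independent set of $G_{\mathcal M}$. The sign pattern $f$ read off from $\mathcal S$ satisfies $\prod_{P\in S'}f_PP\neq-\iden$ for all $S'\subseteq T$, since $-\iden\notin\mathcal S$; hence $f\in\mathrm B_T$. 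Thus $\Pi_{\mathcal M}\vec\psi^{\mathcal S}=\vec v_{T,f}$. If $T$ is not maximal, extend it to a maximal independent set $S\supseteq T$. I will then write $\vec v_{T,f}$ as the uniform average of $\vec v_{S,f'}$ over the admissible extensions $f'$ of $f$. This works because the elements of $S\setminus T$ are either products of elements of $T$, with their sign forced, or independent generators whose sign can be taken $\pm$ with equal weight, averaging to $0$. Here I must track what the vertices in \eqref{eq:consistentvector} do on the forced elements; if needed I will reinterpret $\vec v_{S,f}$ as the projection of the stabilizer state generated by $\{f_PP\}_{P\in S}$, completed arbitrarily.

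\emph{Step 2: every $\vec v_{S,f}$ is realized by a stabilizer state.} Since $S$ is pairwise commuting and $f\in\mathrm B_S$, the group $\langle f_PP: P\in S\rangle$ is abelian and does not contain $-\iden$. It is therefore a stabilizer group, and it can be completed to a maximal one, giving a state $\psi^{\mathcal S}$. For $Q\in\mathcal M\setminus S$, maximality of $S$ means $Q$ anticommutes with some $P\in S$, so $\langle Q\rangle=0$ because $P$ stabilizes the state. For $Q\in S$ we get $\langle Q\rangle=f_Q$. This matches \eqref{eq:consistentvector} and proves the equality of polytopes.

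\emph{Step 3: counting.} The count is $|\mathrm{stab}(\mathcal M)|\le |I_{\max}(G_{\mathcal M})|\cdot\max_S|\mathrm B_S|$.
\begin{itemize}
\item For $|\mathrm B_S|$: admissible sign patterns are determined by the signs on a maximal independent generating set of $\langle S\rangle$. This set has size at most $\min\{n,m\}$, since an abelian Pauli subgroup without $-\iden$ has rank at most $n$. This gives the factor $2^{\min\{n,m\}}$.
\item For the number of maximal independent sets, the Moon--Moser bound gives $3^{m/3}=2^{cm}$ with $c=\log 3/3$.
\item Alternatively, each maximal independent set determines, and is determined by, the abelian subgroup it generates intersected with $\mathcal M$. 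The number of abelian (isotropic) subgroups of $\tilde{\mathcal P}_n$ is $2^{(1/2+o(1))n^2}$, by the standard count of isotropic subspaces of $\mathbb F_2^{2n}$ (the same count giving $|\mathrm{stab}_n|=2^{\Theta(n^2)}$). This yields the $\min$ in the second exponent.
\end{itemize}

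\emph{Step 4: tightness.} For $\mathcal M=\bigcup_i\{X_i,Y_i,Z_i\}$, the graph $G_{\mathcal M}$ is a disjoint union of triangles. Its maximal independent sets pick one Pauli per qubit, giving $3^{m}$ sets for $3m$ vertices and matching Moon--Moser. Each such set has all $2^m$ sign patterns admissible, and each resulting vertex $\vec v_{S,f}$ is extremal: it is a product of single-qubit octahedron vertices. Hence the count equals $2^m3^m$, which saturates the bound with $n=m$ after reindexing $m$ appropriately.

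The main obstacle is Step 1's averaging argument when products of elements of $T$ land in $\mathcal M$, together with the precise handling of forced signs in \eqref{eq:consistentvector}. The other delicate point is making the $(1/2+o(1))n^2$ counting rigorous rather than heuristic.
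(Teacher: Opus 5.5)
The real gap is the $(1/2+o(1))n^2$ term in your Step 3. The map $S\mapsto\langle S\rangle$ is injective on $I_{\max}(G_{\mathcal M})$. However, the number of abelian subgroups of $\tilde{\mathcal P}_n$, i.e.\ of isotropic subspaces of $\mathbb F_2^{2n}$, is not $2^{(1/2+o(1))n^2}$. There are $\prod_{i=0}^{k-1}(2^{2(n-i)}-1)/(2^{i+1}-1)=2^{2nk-3k^2/2+O(n)}$ isotropic subspaces of dimension $k$. This count peaks near $k=2n/3$, so the total is $2^{(2/3+o(1))n^2}$. Already for $n=3$ there are $315$ two-dimensional isotropic subspaces against $135$ Lagrangian ones. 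As written, your argument therefore only gives exponent $2n^2/3$; the constant $1/2$ belongs to \emph{maximal} isotropic subspaces, i.e.\ to pure stabilizer states.

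The repair is short: send $S$ to any Lagrangian $L$ containing the span $V_S$ of the binary vectors of $S$. This is still injective. Any $P\in\mathcal M$ whose binary vector lies in $L$ commutes with all of $S$, hence lies in $S$ by maximality, so $S$ is recovered from $L$. There are $\prod_{i=1}^n(2^i+1)=2^{(1/2+o(1))n^2}$ Lagrangians. Alternatively, as the paper does, combine your Step 2 with the fact that every $S$ carries at least one admissible $f$ and distinct pairs $(S,f)$ give distinct vectors. Then $|I_{\max}(G_{\mathcal M})|\le|\mathrm{stab}_n|=2^{(1/2+o(1))n^2}$.

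The rest is sound. Your Moon--Moser bound, the $2^{\min\{n,m\}}$ factor, and the tightness example coincide with the paper's.

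Your two-inclusion proof of the polytope identity takes a different route from the paper. The paper groups pure stabilizer states by $\mathcal S\cap\pm\mathcal M$ and asserts that, since $\mathcal S$ has full rank, this intersection is always a maximal closed commuting subset. That is false: take $\mathcal M=\{Z_1,Z_2\}$ and $\mathcal S=\langle Z_1,X_2\rangle$, whose projection $(1,0)$ is not a vertex of $[-1,1]^2$. So your averaging step is exactly what makes $\Pi_{\mathcal M}\mathrm{STAB}_n\subseteq\mathrm{conv}\{\vec v_{S,f}\}$ rigorous.

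The obstacle you flag there does not arise. If some $P\in S\setminus T$ had binary vector in the span $V_T$ of $T$, then $\pm P$ would lie in the group generated by $\mathcal S\cap\pm\mathcal M\subseteq\mathcal S$, forcing $P\in T$. So no element of $S\setminus T$ has a forced sign. Choose a basis of $V_S$ extending one of $V_T$. The admissible extensions of $f|_T$ are then parametrized by free signs on the new basis vectors. Writing $f'_P=(-1)^{x_P}$, each $x_P$ with $P\in S\setminus T$ is a non-constant affine function over $\mathbb F_2$ of those free bits, hence balanced. The uniform average of $\vec v_{S,f'}$ therefore equals $f$ on $T$, and $0$ on $S\setminus T$ and on $\mathcal M\setminus S$. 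This matches $\Pi_{\mathcal M}\vec\psi^{\mathcal S}$ exactly, so no reinterpretation of $\vec v_{S,f}$ is needed.
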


We present the full proof in Appendix~\ref{proof:reducedstabdecomp}, and here we outline the main ideas underlying the construction. Starting from the original ``top--down'' $V$-representation of the reduced polytope in Eq.~\eqref{eq:topdownVrep}, one may view the extremal points of $\mathrm{STAB}(\mathcal M)$ as arising from the $\mathcal M$-correlations of pure stabilizer states. This perspective naturally leads to the question of which stabilizer structures are already encoded in the measurement set itself. Namely, which subsets of $\mathcal M$ can simultaneously fix expectation values $\langle P_i \rangle=\pm1$.

This intuition is formalized through the notion of \emph{closed commuting subsets} (CCSs)~\cite{sun2025stabilizerground}, defined as
\begin{equation}
    \mathrm C(\mathcal M)
    =
    \{Q \subseteq \pm \mathcal M \;|\;
    Q = \langle Q\rangle \cap \pm \mathcal M,\;
    -1 \notin \langle Q\rangle\}\;,
    \label{CCS}
\end{equation}
where $\pm \mathcal M = \mathcal M \sqcup (-\mathcal M)$ and $-\mathcal M=\{-P_1,-P_2,\cdots,-P_m\}$. The terminology reflects the two defining properties of any $Q \in \mathrm C(\mathcal M)$. First, the set is \emph{closed}, in the sense that the subgroup it generates, $\langle Q\rangle$, intersects $\pm \mathcal M$ only on the elements of $Q$ itself. Second, it is \emph{commuting}, since the condition $-1 \notin \langle Q\rangle$ implies that $\langle Q\rangle$ is an Abelian subgroup of the Pauli group and therefore defines a stabilizer group\footnote{To see why it also imposes commutativity, remember that a property of the Pauli group is that a Pauli pair either commutes or anti-commutes. But if there is an anticommuting pair $P, P^\prime \in Q$, note that $PQPQ=-1 \in \langle Q\rangle$. Hence, $-1 \notin \langle Q\rangle$ ensures that $\langle Q \rangle$ is abelian.}. Furthermore, we will consider \textit{maximal commuting subsets} (MCSs), where we impose the additional property that, given $Q\subseteq Q'$, then $Q=Q'$. We denote this set by $\mathrm C_{\mathrm{max}}(\mathcal M)$.

Operationally, the elements of $\mathrm C_{\mathrm{max}}(\mathcal M)$ may be interpreted as \emph{measurement contexts}~\cite{amaral2018graph}: maximal subsets of Pauli observables within $\mathcal M$ that can simultaneously attain definite values. The central and somewhat surprising result, proved in Appendix~\ref{proof:reducedstabdecomp}, is that these MCSs suffice to generate the reduced stabilizer polytope. More precisely, one can construct a $V$-representation such that
\begin{equation}
   \mathrm{stab}(\mathcal M) \cong \mathrm C_{\mathrm{max}}(\mathcal M)\;.
\end{equation}
This characterization makes explicit that the complexity of the reduced polytope is governed by the algebraic and combinatorial structure of the measurement set $\mathcal M$, rather than by the exponentially large set of pure stabilizer states appearing in the top-down construction.

The remainder of the proof establishes a connection between MCSs and a purely combinatorial problem: finding consistent independent sets in an associated frustration graph. This reformulation makes clear why we refer to the present approach as \emph{bottom--up}. Rather than projecting the full stabilizer polytope without regard to measurement structure, the stabilizer correlations accessible to a restricted measurement set are reconstructed directly from the relations among the measurements themselves. Moreover, the combinatorial expression of the convex hull in Eq.~\eqref{eq:reducedstabindependentsetss} naturally suggests an efficient algorithmic strategy: construct the frustration graph and enumerate its consistent independent sets.

We defer a detailed discussion of this algorithm to Sec.~\ref{sec:algoreducedstab}. Instead, in the next section, we illustrate the geometry of reduced stabilizer polytopes for small measurement sets, which provides valuable intuition for how their structure emerges from the underlying algebraic relations among the observables.

\subsection{On small polytopes \label{sec:smallreducedstab} and Pauli commutativity}

We now explore reduced polytopes for $m \leq 3$, since they are the case that can be visualized and hence easily understood. All the three cases are plotted in Fig.~\ref{fig:smallpolytopes}.

\begin{figure*}
    \centering
    \includegraphics{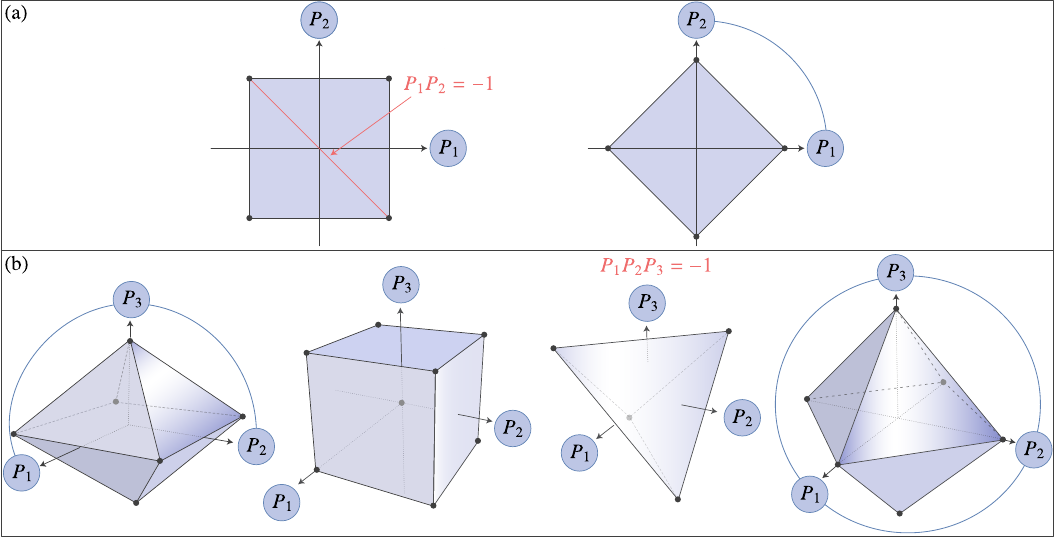}
    \caption{Illustration of some of the reduced polytopes for $m \in \{2,3\}$. We draw the polytopes in $[-1,1]^m$, with the axis labels as vertices of the frustration graph drawn in overlap. In (a), we show the two reduced polytopes coming from the two possible frustration graphs, where in red we show the corresponding reduced polytope if the extra condition $P_1P_2=-1$ is added in the commuting case. In (b), we highlight four possibilities: The fully commuting case, where the correlations span the full $[-1,1]^3$, the case where the commuting tuple also satisfies the constraint $P_1 P_2 P_3=-1$, the case of a graph of two edges, and the fully anti-commuting case, that reduces to the definition of $\mathrm{STAB}_1$.}
    \label{fig:smallpolytopes}
\end{figure*}

For $m=1$, we have $\mathcal M =\{P_1\}$. We have the frustration graph of one vertex, which also composes the single maximal independent subset. Since we can either assign $f_{P_1} =1$ or $f_{P_1}=-1$, we have:
\begin{equation}
    \mathrm{STAB}(\{P_1\}) = \mathrm{conv}(-1,1)=[-1,1]\;,
\end{equation}
which is indeed the full range of $\langle P_1\rangle$, since $\|P_1\|=1$. A single Pauli expectation value cannot witness nonstabilizerness.

The first non-trivial case is for $m=2$. If $[P_1,P_2] = 0$, the graph is fully disconnected and and there is a single maximally independent set, $I_\mathrm{max}(G_\mathcal M) = \mathcal M$. We have two further cases: If $P_1 \neq -P_2$, they are fully independent and:
\begin{equation}
    \mathrm{STAB}(\{P_1,P_2\}) =\mathrm{conv}(\{(\pm 1,\pm 1)\}) = [-1,1]^2\;.
\end{equation}
Otherwise, if $P_1=-P_2$, we obtain $\mathrm{conv}((1,-1),(-1,1))$, forming the red line in Fig.~\hyperref[fig:smallpolytopes]{\ref{fig:smallpolytopes}(a)}, since those two are the consistent sign assignments for the two Paulis. In all those cases, again, $\langle P_1\rangle$ and $\langle P_2\rangle$ fully obtain all the admissible expectation values, and no nonstabilizerness can be detected. The other case is when $\{P_1, P_2\} = 0$, and then the independent sets are formed by single Paulis. In this case, we have:
\begin{equation}
    \mathrm{STAB}(\{P_1,P_2\}) = \mathrm{conv}(\{(\pm 1, 0), (0,\pm 1 )\})\;,
\end{equation}
as illustrated as the second example in Fig.~\hyperref[fig:smallpolytopes]{\ref{fig:smallpolytopes}(a)}. Now, one can see that there is a range of correlations for which nonstabilizerness can be witnessed, since the corresponding polytope does not fully span the operator spectrum of both Paulis.

For $m=3$, different commutation structures can emerge. We might have $|E(G_\mathcal M)| \in \{0,1,2,3\}$, which labels the 4 possible graph permutation classes
based on the number of edges. The cases with $|E(G_\mathcal M)| \in \{0,2,3\}$ are illustrated in Fig.~\hyperref[fig:smallpolytopes]{\ref{fig:smallpolytopes}(b)}. Note that assuming negative Pauli product constraints, such as $P_1P_2=-1$, it would correspond also to a projection, since does correspond to imposing the extra constraint $\langle P_1 \rangle + \langle P_2\rangle = 0$. For $m \geq 3$ however, it starts to appear more interesting constraints, for example, $P_1P_2P_3=\pm 1$, since they cannot be written linearly in the expectation values. In the figure, the fully commuting case is shown with and without this extra constraint. Note that, for the fully anticommuting case, we have the octahedron that also defines $\mathrm{STAB}_1$, since indeed it is the reduced polytope for the single-qubit Pauli triple: $\mathrm{STAB}_1=\mathrm{STAB}(\{X,Y,Z\})$.

A very clear message that those examples show is that (independent) commuting sets of Paulis are trivial as witnesses. Indeed, we can show that this is a generic result:
\begin{prop}
    Let $\mathcal M =\{P_1, P_2, \cdots, P_m\}$ be a set of $n$-qubit Pauli strings that are indepenent, meaning that there is no $S \subseteq [m]$ such that $ \prod_{j \in S}P_j=\pm 1 $. If $\mathcal M$ is commuting,
    \begin{equation}
        \mathrm{STAB}(\mathcal M) =[-1,1]^{m}\;.
    \end{equation}
\end{prop}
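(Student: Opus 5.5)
The plan is to prove the two inclusions separately. The inclusion $\mathrm{STAB}(\mathcal M)\subseteq[-1,1]^m$ is immediate, since every Pauli string has operator norm one and hence $|\langle P_i\rangle_\rho|\le 1$ for every state. For the reverse inclusion, $[-1,1]^m$ is the convex hull of its $2^m$ vertices $f\in\{\pm1\}^m$. Since $\mathrm{STAB}(\mathcal M)$ is convex (it is a linear image of a convex set), it therefore suffices to exhibit, for every sign vector $f$, a pure stabilizer state $\ket{\psi_f}$ with $\langle P_i\rangle_{\psi_f}=f_i$ for all $i$.

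To build $\ket{\psi_f}$, I would consider the set $Q_f=\{f_1P_1,\dots,f_mP_m\}$. Because $\mathcal M$ is commuting, $\langle Q_f\rangle$ is abelian. Its elements are $\pm\prod_{j\in S}P_j$ with sign $\prod_{j\in S} f_j$, up to the phase bookkeeping, which is trivial for commuting Hermitian Paulis, since their products are Hermitian Paulis with real sign. The key step is to show $-\iden\notin\langle Q_f\rangle$. If $-\iden=\prod_{j\in S}f_jP_j$ for some nonempty $S$, then $\prod_{j\in S}P_j=\pm\iden$, which contradicts the independence hypothesis. So $\langle Q_f\rangle$ is a stabilizer group.

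The same hypothesis shows that the images of $P_1,\dots,P_m$ in $\tilde{\mathcal P}_n\cong\mathbb F_2^{2n}$ are linearly independent and span an isotropic subspace. Hence $m\le n$, and the group has rank $m$. I would then extend $Q_f$ to a maximal stabilizer group $\mathcal S\supseteq Q_f$ of rank $n$, using the standard fact that any isotropic subspace of the symplectic space $\mathbb F_2^{2n}$ extends to a Lagrangian one. The extra generators are chosen with arbitrary signs; no sign conflict arises because the new generators are independent of the old ones. The corresponding state $\ket{\psi^{\mathcal S}}$ satisfies $f_iP_i\ket{\psi^{\mathcal S}}=\ket{\psi^{\mathcal S}}$, so $\langle P_i\rangle=f_i$, as required.

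Alternatively, one can read this off Theorem~\ref{thm:reducedstabdecomp}. The frustration graph of $\mathcal M$ has no edges, so its unique maximal independent set is $\mathcal M$ itself. By independence, every $f\in\{\pm\}^m$ belongs to $\mathrm B_{\mathcal M}$, because no subproduct equals $\pm\iden$. The vertices $\vec v_{\mathcal M,f}=f$ are then exactly the $2^m$ cube vertices. I would present the direct construction as the main argument and note this as a consistency check.

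The main obstacle is the extension step together with the care needed in handling signs. One must verify that independence modulo phases, ruling out products equal to $\pm\iden$ including $\pm i\iden$, which cannot occur for commuting Hermitian strings, gives exactly isotropic linear independence in $\mathbb F_2^{2n}$. One must also invoke symplectic Gram--Schmidt to complete it to a Lagrangian subspace. Both are standard, but they are where the hypothesis is genuinely used.
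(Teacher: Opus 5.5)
Your proposal is correct. The paper's proof is exactly what you list as a consistency check: a commuting $\mathcal M$ has an edgeless frustration graph, so $I_\mathrm{max}(G_\mathcal M)=\{\mathcal M\}$. Independence makes every $f\in\{\pm\}^m$ admissible in $\mathrm B_{\mathcal M}$. Theorem~\ref{thm:reducedstabdecomp} then gives the vertices $\vec v_{\mathcal M,f}=f$, so the polytope is the full cube.

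Your main argument takes a different, self-contained route. For each corner $f$ you construct an explicit stabilizer state. You show $-\iden\notin\langle f_1P_1,\dots,f_mP_m\rangle$ directly from independence. You then complete the isotropic subspace spanned by the $P_i$ in $\mathbb F_2^{2n}$ to a Lagrangian one, assigning arbitrary signs to the new generators. This amounts to inlining the one step of Theorem~\ref{thm:reducedstabdecomp}'s proof that matters here: every sign-consistent commuting subset extends to a full-rank stabilizer group.

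What your version buys:
\begin{itemize}
\item It does not depend on the heavier bijection between maximal commuting subsets and classes of stabilizer states, or on the graph-theoretic reformulation.
\item It proves both inclusions explicitly; the trivial one, $\mathrm{STAB}(\mathcal M)\subseteq[-1,1]^m$, is only implicit in the paper's $V$-representation equality.
\item It shows exactly where the hypothesis enters: sign consistency, and linear independence in $\mathbb F_2^{2n}$, with the byproduct $m\le n$.
\end{itemize}

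The paper's route is a one-liner once the general theorem is in hand. Its value is mainly illustrative: the proposition becomes the simplest instance of the frustration-graph picture, namely a single maximal independent set with no sign constraints.
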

\begin{proof}
    The proof follows from recognizing there is a single maximally independent subset, $I_\mathrm{max}(G_\mathcal M) = \{\mathcal M\}$. By independence, every $f \in \{\pm\}^{|\mathcal M|=m}$ is in $\mathrm B_\mathcal M$, since all of the sign assignments do satisfy the product condition. Hence, by Thm.~\ref{thm:reducedstabdecomp}, the reduced polytope is the convex hull of the Hamming cube:
    \begin{align}
    \mathrm{STAB}(\mathcal M) &=\mathrm{conv}\{\underbrace{(\pm 1, \pm 1, \cdots, \pm 1)}_{\mathrm{length\;}m}\}\;,  
    \end{align}
    which is exactly $[-1,1]^m$.
\end{proof}

\subsection{Explicit algorithm~\label{sec:algoreducedstab}}
Now that some intuition is gained on how the combinatorial problem of the frustration graph fixes the polytope, we present the general algorithm. The detailed analysis for the runtime is left for Appendix~\ref{proof:Vrepalgo}.
\begin{thm}
    Let $\mathcal M =\{P_1,P_2, \cdots, P_m\}$ be a set of $n$-qubit Pauli strings. There is a classical algorithm, sketched in Alg.~\ref{alg:Vrep}, that takes $\mathcal M$ as an input and outputs the $V$-representation of Thm.~\ref{thm:reducedstabdecomp} with runtime
    \begin{equation}
        T = p(n,m)2^{O(\min\{m,n\}+ \min(m,n^2))}\;,
    \end{equation}
    where $p(n,m)$ is bounded by some polynomial in $n$ and $m$.
    \label{thm:Vrepalgo}
\end{thm}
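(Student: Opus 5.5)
The algorithm works directly from the combinatorial description of Thm.~\ref{thm:reducedstabdecomp}: we enumerate pairs $(S,f)$ with $S$ a maximal independent set of the frustration graph $G_\mathcal M$ and $f\in\mathrm B_S$, and we output the vectors $\vec v_{S,f}$. The runtime then splits into three parts: (i) building $G_\mathcal M$, (ii) enumerating $I_\mathrm{max}(G_\mathcal M)$, and (iii) enumerating the admissible sign assignments for each $S$ together with the consistency test.

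Step (i) is polynomial. Each $P_i$ is encoded in the binary symplectic representation as a vector in $\mathbb F_2^{2n}$ with a phase. Commutation of $P_i$ and $P_j$ is read off from the symplectic inner product in $O(n)$ time, so $G_\mathcal M$ is built in $O(nm^2)$.

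For step (iii), fix a commuting set $S$. Its elements span an isotropic subspace of $\mathbb F_2^{2n}$ of dimension $r\le\min\{|S|,n\}$. Gaussian elimination in time $\mathrm{poly}(n,m)$ gives a basis of $r$ elements of $S$ and expresses every other element of $S$ as a product of basis elements times a known phase $\pm1$. The condition $\prod_{P\in S'}f_PP\neq-1$ for all $S'\subseteq S$ is equivalent to linear consistency. The signs on the basis elements may be chosen freely, in $2^r$ ways, and each dependent element then has its sign forced by its expression in the basis. Also, $f\in\mathrm B_S$ forces distinct elements of $S$ to be non-proportional, since $P=-Q$ would violate the condition. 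So $|\mathrm B_S|\le 2^{\min\{n,|S|\}}\le 2^{\min\{n,m\}}$. These assignments are listed in time $\mathrm{poly}(n,m)\,2^{\min\{n,m\}}$ by looping over basis signs and propagating through the elimination record, so no exponential check over subsets $S'$ is needed. This gives the factor $2^{O(\min\{m,n\})}$.

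Step (ii) is where I expect the main obstacle: the number of maximal independent sets must be bounded by $2^{O(\min\{m,n^2\})}$ and enumerated with polynomial delay. For the $m$ branch, I would cite Moon–Moser ($\le 3^{m/3}$ maximal independent sets) together with a polynomial-delay enumeration algorithm such as Tsukiyama et al.\ or Johnson–Yannakakis–Papadimitriou, which gives $2^{O(m)}$. For the $n^2$ branch, the bound cannot come from the graph alone. Instead I would use the fact that the relevant maximal independent sets are determined by the subgroup they generate. Each maximal commuting subset $S$ is the set of elements of $\mathcal M$ lying in the isotropic subspace $\mathrm{span}(S)\subseteq\mathbb F_2^{2n}$, so distinct maximal sets correspond to distinct isotropic subspaces. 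The number of isotropic subspaces of $\mathbb F_2^{2n}$ is $2^{(1/2+o(1))n^2}$, matching the count $|\mathrm{stab}_n|$. The enumerator would therefore be: if $m\lesssim n^2$, run the graph enumeration; otherwise enumerate maximal isotropic subspaces (equivalently, stabilizer groups modulo signs) in time $2^{O(n^2)}$, intersect each with $\mathcal M$, and deduplicate or keep only the maximal intersections with a hash table. Combining the branches with the sign enumeration gives $T=p(n,m)\,2^{O(\min\{m,n\}+\min\{m,n^2\})}$. The delicate point is checking that taking maximal intersections of $\mathcal M$ with Lagrangian subspaces recovers exactly $I_\mathrm{max}(G_\mathcal M)$. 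This holds because every commuting subset extends to a Lagrangian subspace, and conversely every such intersection is commuting.
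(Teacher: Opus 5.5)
Your proposal is correct and follows the paper's proof essentially step for step: the frustration graph from the symplectic form in $O(m^2n)$ time, output-sensitive enumeration of $I_{\mathrm{max}}(G_{\mathcal M})$ via Tsukiyama et al.\ with the Moon--Moser bound $3^{m/3}$, and $\mathbb F_2$ linear algebra giving $|\mathrm B_S|=2^{\mathrm{rank}\,\mathbf M_S}\le 2^{\min\{n,|S|\}}$ (your free-signs-on-a-basis parametrization is equivalent to the paper's affine system $\mathbf K\mathbf x=\boldsymbol\sigma$). The separate Lagrangian-subspace enumerator for large $m$ is unnecessary, because any bound on $|I_{\mathrm{max}}(G_{\mathcal M})|$ already controls the output-sensitive enumeration, whether it is your injectivity of $S\mapsto\mathrm{span}(S)$ or the paper's $|I_{\mathrm{max}}(G_{\mathcal M})|\le|\mathrm{stab}(\mathcal M)|\le|\mathrm{stab}_n|$; there are also two harmless slips, neither of which affects the $2^{O(\cdot)}$ claim: all isotropic subspaces of $\mathbb F_2^{2n}$ number $2^{(2/3+o(1))n^2}$, with only the Lagrangian ones numbering $2^{(1/2+o(1))n^2}$ (so send $S$ to a Lagrangian containing $\mathrm{span}(S)$ if you want the sharper constant), and $S$ may contain both $P$ and $Q=-P$, which is admissible whenever $f_Q=-f_P$.
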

Note that the runtime of the algorithm to list all $|\mathrm{stab}(\mathcal M)|$ $m$-dimensional vectors has a similar upper bound as the length of the list itself. This is due the way Alg.~\ref{alg:Vrep} works: We list all the possible independent subsets and build the sign assignments, that give the exponential scaling. The polynomial overhead mostly comes from building the compatible sign assignments and the vectors $\vec v_{S,f}$.

\begin{algorithm}[H]
\caption{V-representation of STAB($\mathcal{M}$) \label{alg:Vrep}}
    \KwIn{$\mathcal{M} = \{P_1, P_2, \dots, P_m\}$ as the set of Pauli measurements;}
    \KwOut{$\mathrm{stab}(\mathcal M)=\{\vec{v}_1, \vec{v}_2, \dots, \vec{v}_{|\text{stab}(\mathcal{M})|}\}$ as defined on Thm.~\ref{thm:reducedstabdecomp}.}
    Initialize $V \gets \{\;\}$;\\
    Compute $E(G_\mathcal{M})$;\\
    Find $I_{max}(G_\mathcal{M})$;\\
    \For{$S \in I_\mathrm{max}(G_\mathcal M)$}{
        Compute all $f \in B_S$ and append $\vec{v}_{S,f}$ to $V$;
    }
    \Return{V}
\end{algorithm}
As a corollary, we get the complexity of estimating Eq.~\eqref{eq:reducedRoM}: It is known that from the theory of linear programs that finding a $\vec{x} \in \mathbb R^m$ that $\epsilon$-approximates a linear objective with $M$ constraints can be found in $O(m (m+ M)^{3/2}\log(1/\varepsilon))$ time~\cite{vaidya1989speeding}. Then, to estimate the reduced RoM, we first obtain the $V$-representation of the reduced polytope from the set of Pauli measurements via the algorithm of Thm.~\ref{thm:reducedstabdecomp} and then run the linear program with $M =O(2^{\min\{m,n\}+ \min\{cm,(1/2+o(1))n^2\}})$. Since the linear programming runtime dominates due to the extra power of $3/2$, we have:
\begin{cor}
    Given a set of expectation values of $n$-qubit Pauli strings $\mathcal M$ on a quantum state $\psi$, $(\langle P_1\rangle_\psi, \langle P_2\rangle_\psi, \cdots, \langle P_m \rangle_\psi)$, with $m \leq  \mathrm{poly}(n)$, there is an algorithm that $\epsilon$-approximates $\mathrm{RoM}_\mathcal M(\psi)$ in $O(2^{c^\prime \min\{m,n^2\}}\log(1/\epsilon))$ runtime, where $c^\prime=O(1)$. 
\end{cor}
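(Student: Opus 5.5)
The plan is to compose Thm.~\ref{thm:Vrepalgo} with a standard linear-programming solver and then bound the total cost. First, I would run Alg.~\ref{alg:Vrep} on $\mathcal M$. By Thm.~\ref{thm:Vrepalgo} it returns the list $\mathrm{stab}(\mathcal M)=\{\vec v_j\}$ in time $p(n,m)2^{O(\min\{m,n\}+\min\{m,n^2\})}$. By Thm.~\ref{thm:reducedstabdecomp} this list has size $M\le 2^{\min\{n,m\}+\min\{cm,(1/2+o(1))n^2\}}$. Since $\min\{n,m\}\le\min\{m,n^2\}$ and $\min\{cm,(1/2+o(1))n^2\}\le \min\{m,n^2\}$ up to constants, both quantities are $2^{O(\min\{m,n^2\})}$. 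The polynomial prefactor $p(n,m)$ is also absorbed: since $m\le\mathrm{poly}(n)$, $\log p(n,m)=O(\log n)$, which is $O(\min\{m,n^2\})$ whenever $m\ge\Omega(\log n)$. For smaller $m$ the problem has constant size up to $\mathrm{poly}(n)$ input-reading cost, and I would treat it separately or fold it into $c'$.

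Second, I would write Eq.~\eqref{eq:reducedRoM} as a standard LP. Split $\vec x=\vec x^+-\vec x^-$ with $\vec x^\pm\ge0$, minimize $\sum_j(x_j^++x_j^-)$, and impose the $m+1$ equality constraints $\sum_j x_j\vec v_j=\Pi_\mathcal M\vec\psi$ and $\sum_j x_j=1$. The LP is feasible because $\Pi_\mathcal M\vec\psi$ lies in the affine span of the $\vec v_j$. Indeed, the coordinate axes $\pm\vec e_P$ are all realized, or are reduced to fewer dimensions when linear relations like $P_1=-P_2$ hold, and those relations are also satisfied by $\vec\psi$. The LP is bounded below by $1$ (Lemma~\ref{lem:reducedROMproperties}). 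It has $2M$ variables and $O(m+M)$ constraints, all entries in $\{0,\pm1\}$ or the given data. Invoking the interior-point bound of Vaidya~\cite{vaidya1989speeding} gives an $\epsilon$-approximation in time $O(M'(m+M)^{3/2}\log(1/\epsilon))$, where $M'=2M$ is the dimension. With $M=2^{O(\min\{m,n^2\})}$ this is $2^{c'\min\{m,n^2\}}\log(1/\epsilon)$ for a suitable constant $c'$.

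Finally, the total cost is the sum of the two stages, and the LP stage dominates because of the extra power. This yields the claimed $O(2^{c'\min\{m,n^2\}}\log(1/\epsilon))$. The main obstacle I expect is bookkeeping rather than substance: checking that feasibility holds (the affine-span argument) and that the polynomial overheads and bit-complexity of the input data are legitimately absorbed into the exponent. For the latter I would assume the expectation values are given to $O(\log(1/\epsilon))$ bits, so the LP input size stays $M\cdot\mathrm{poly}(m,\log(1/\epsilon))$.
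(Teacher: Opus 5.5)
Your proposal is correct and follows the paper's own argument. You build $\mathrm{stab}(\mathcal M)$ with Alg.~\ref{alg:Vrep} (Thm.~\ref{thm:Vrepalgo}), bound its size $M$ via Thm.~\ref{thm:reducedstabdecomp}, and solve the LP of Eq.~\eqref{eq:reducedRoM} with Vaidya's interior-point method, whose $\mathrm{poly}(M)\log(1/\epsilon)$ cost dominates. Your feasibility and bit-complexity remarks are extra care that the paper omits.

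One caveat concerns your plan to fold the $\mathrm{poly}(n)$ cost of reading the Paulis and building $G_{\mathcal M}$ into $c'$. That does not work when $m=o(\log n)$: for $m=O(1)$ the runtime is $\Theta(n)$, as Appendix~\ref{proof:Vrepalgo} itself notes. The stated bound therefore only holds with a suppressed $\mathrm{poly}(n,m)$ prefactor, which is how the paper implicitly reads it.
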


Note that, up to a polynomial-time overhead, our algorithm outputs a compressed representation of the polytope in time proportional to its maximal possible size. Since the $2^{c m}$-length representation is known to be tight, this scaling indicates that the algorithm is near-optimal.

\section{stabilizer marginal problem and complexity lower bound \label{sec:stabmargprob}}
We will show in this section that under the assumption that $\mathrm P \neq \mathrm NP$, if there is an algorithm that decides if a set of Pauli correlations is inside the reduced polytope, it must run in
\begin{equation}
    T =\Omega(\mathrm{poly(n,m} ) )
    \label{eq:complexitylowerbound}
\end{equation}
time, meaning that there is no poly-time algorithm in the number of Pauli measurements to decide if the corresponding set of expectation values lies inside the reduced polytope or not. Doing so, we will make contact with a classical problem in quantum information theory, the quantum marginal problem.

First, we describe the task formulated as what is called a decision problem \footnote{In this section, since we are interested in the complexity-theoretic aspects, assume that all continuous-variable inputs are $\mathrm{poly}(n)$ sized. This includes the Pauli expectation values in Def. \ref{def:stabreducedprob} and the stabilizer probabilities in Def. \ref{def:stabmargprob}. This all can be done by assuming rational inputs.}:
\begin{defn}
    We define the reduced stabilizer membership problem as the decision problem (rSMP) defined by:
    \begin{itemize}
        \item \textbf{Inputs:} A set of $m  \leq \mathrm{poly}(n)$ $n$-qubit Pauli strings, $\mathcal M = \{P_1, P_2, \cdots, P_m\}$ and a vector $\vec{V} \in [-1,1]^m$;
        \item \textbf{Decision:} Output YES if $\vec V \in \mathrm{STAB}(\mathcal M)$ and NO otherwise.
    \end{itemize}
    \label{def:stabreducedprob}
\end{defn}
And, by Eq.~\eqref{eq:reducedstabwitness}, all the YES outputs to this problem match with all YES outputs with the stabilizer membership problem, which decides if a $n$-qubit quantum state $\psi$ is stabilizer or not. However, this variation only requires inputs of size $\mathrm{poly}(n)$.

Before proceeding, let us give a quick review what is called the quantum marginal problem (QMP). We assume to have $n$ qubits and subsets $\{R_1, R_2, \cdots, R_N\}\subseteq [n]$ of constant size, $|R_i|=k=O(1)$, and a list of density matrices $\{\rho_1, \rho_2, \cdots, \rho_N\}$ supported on those subsets. The problem is to find if there is a global density matrix $\rho$ such that it correctly marginalizes to the inputs within a $1/\mathrm{poly}(n)$ precision. It was shown that this problem is QMA-hard~\cite{liu20006consistency}, meaning that, even with quantum computers, it is expected that there is no efficient algorithm to solve it, under standard complexity-theoretic assumptions. 

The quantum marginal problem is the natural generalization of a classical problem in probability theory, referred to as the classical marginal problem (CMP). The difference is that the inputs are probability distributions $\{\vec p_1, \vec p_2, \cdots, \vec p_N\}$ each over a constant number of bits, and we ask if there is a global probability over $n$ bits that marginalizes to each of the inputs. It is known that this is NP-complete~\cite{pitowsky1989quantum, pitowsky1991correlation}.  

Note the close similarity with our task, namely, determining whether there exists a global stabilizer state that reproduces the statistics of a given set of Pauli observables $\mathcal{M}$. This motivates us to formulate a variant of the quantum marginal problem restricted to the stabilizer setting~\cite{hsieh2024resourcemarginal}:
\begin{defn}
    We define the stabilizer marginal problem (SMP) on $n$ qubits as the problem defined by:
    \begin{itemize}
        \item \textbf{Input: }A set of stabilizer density matrices $\{\phi_{1}, \phi_{2},\cdots, \phi_{N}\}$, with $N \leq \mathrm{poly}(n)$, $\phi_{i} \in \mathrm{STAB}_{|R_i|}$, specified by:
        \begin{equation}
            \phi_i =\sum_{\mathcal S \in \mathrm{stab}_{|R_i|}}p_\mathcal S\psi^\mathcal S \quad ;\quad \vec{p} \in \Delta_{\mathrm{stab}_{|R_i|}}\;,
        \end{equation}
        where $R_i \subseteq[n]\;,\; |R_i| \leq k=O(1)$ for all $i \in [N]$;
        \item \textbf{Decision: }Output YES if exists a $n$ qubit stabilizer state $\phi \in \mathrm{STAB}_n$ such that:
        \begin{equation}
            \tr_{\bar{R_i}}\phi =\phi_{i}\;,\;\forall i \in [N]\;,
        \end{equation}
        and NO otherwise.
    \end{itemize}
    \label{def:stabmargprob}
\end{defn}
The standard tool to relate problems of this kind is a polynomial-time reduction~\cite{arora2009computational}. Concretely, if a decision problem $A$ reduces to a problem $B$, then any algorithm that solves $B$ can be used to solve $A$ with only polynomial overhead. We establish a sequence of such reductions connecting the classical, stabilizer, and reduced membership problems:
\begin{thm}
    Denoting $\leq_\mathrm{poly}$ for the polynomial-time reduction, the following are true:
    \begin{enumerate}
        \item $\mathrm{CMP} \leq_\mathrm{poly}\mathrm{SMP}$;
        \item $\mathrm{SMP} \leq_\mathrm{poly} \mathrm{rSMP}$\;.
    \end{enumerate}
    Thus, $\mathrm{CMP} \leq_\mathrm{poly} \mathrm{rSMP}$, and therefore, both $\mathrm{SMP}$ and $\mathrm{rSMP}$ are NP-hard.
    \label{thm:CMPtorSMPreduction}
\end{thm}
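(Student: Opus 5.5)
The plan is to prove the two reductions separately by explicit constructions. NP-hardness then follows by composing them: polynomial-time reductions compose, and $\mathrm{CMP}$ is NP-complete~\cite{pitowsky1989quantum, pitowsky1991correlation}. Throughout we assume, as in the footnote, that all numerical inputs are rationals of polynomial bit size. Every object we build will have constant size per local term, since $|R_i|\leq k=O(1)$, so size bounds are immediate.

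\emph{Step 1: $\mathrm{CMP}\leq_\mathrm{poly}\mathrm{SMP}$.} Given local distributions $\vec p_i$ on bit subsets $R_i$, map each one to the diagonal state $\phi_i=\sum_{x\in\{0,1\}^{R_i}} p_i(x)\,|x\rangle\!\langle x|$. Each computational basis state $|x\rangle$ is a pure stabilizer state, so this is already an explicit stabilizer decomposition of $\phi_i$ with $2^{k}=O(1)$ terms; the construction therefore takes polynomial time.
\begin{itemize}
\item If a global distribution $q$ exists, the state $\phi=\sum_x q(x)|x\rangle\!\langle x|$ lies in $\mathrm{STAB}_n$. Its partial traces are $\tr_{\bar R_i}\phi=\phi_i$, because partial trace of a diagonal state is classical marginalization.
\item Conversely, suppose a stabilizer $\phi$ with marginals $\phi_i$ exists. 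Define $q(x)=\langle x|\phi|x\rangle$, which is a probability distribution on $n$ bits. The diagonal of $\tr_{\bar R_i}\phi$ equals the marginal of $q$ on $R_i$, and this diagonal is $p_i$ by assumption. So $q$ solves the CMP instance.
\end{itemize}
The converse does not even use stabilizerness of $\phi$, so both directions are routine.

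\emph{Step 2: $\mathrm{SMP}\leq_\mathrm{poly}\mathrm{rSMP}$.} Given an SMP instance, let $\mathcal M$ be the set of all non-identity Pauli strings supported inside some $R_i$, with repeats removed. Then $m\leq N(4^k-1)=\mathrm{poly}(n)$.
\begin{itemize}
\item For each $P\in\mathcal M$ supported in $R_i$, set $V_P=\tr(P\phi_i)=\sum_{\mathcal S}p_{\mathcal S}\tr(P\psi^{\mathcal S})$. This is computable in constant time per term, since the states live on $O(1)$ qubits.
\item If $P$ is supported in several regions, first check that all the induced values $\tr(P\phi_i)$ agree. This check is polynomial. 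If any disagree, output a fixed NO instance of rSMP, for instance $\mathcal M=\{X,Z\}$ with $\vec V=(1,1)$, which lies outside the diamond of Sec.~\ref{sec:smallreducedstab}.
\end{itemize}
The key equivalence uses local tomographic completeness of the Pauli basis: a global state $\phi$ satisfies $\tr_{\bar R_i}\phi=\phi_i$ for all $i$ if and only if $\langle P\rangle_\phi=V_P$ for every $P\in\mathcal M$. Indeed, $\phi_i$ is determined by its Pauli expectations on $R_i$, and the identity component is fixed by the trace. Hence a global stabilizer state with the prescribed marginals exists if and only if $\vec V\in\Pi_\mathcal M[\mathrm{STAB}_n]=\mathrm{STAB}(\mathcal M)$. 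Two remarks on this step:
\begin{itemize}
\item The projected polytope is the convex hull of projections of pure stabilizer states, and projection commutes with convex combination. So membership of $\vec V$ corresponds exactly to a mixed stabilizer $\phi$.
\item Lemma~\ref{lem:weakdepstab} ensures there is no ambiguity in the qubit number used to define $\mathrm{STAB}(\mathcal M)$.
\end{itemize}

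\emph{Main obstacle.} The main care point is not conceptual but representational. First, one must handle overlapping regions consistently, which the agreement check in Step 2 does. Second, the map must be an exact equivalence rather than an approximate one: the SMP asks for exact marginals, and rSMP uses exact rational $\vec V$, so the precision conventions must match. I would state explicitly that with rational inputs all $V_P$ are rationals of polynomial bit length, so the reduction is exact.

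Finally, composing the two reductions gives $\mathrm{CMP}\leq_\mathrm{poly}\mathrm{rSMP}$. Both $\mathrm{SMP}$ and $\mathrm{rSMP}$ are therefore NP-hard.
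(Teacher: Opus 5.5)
Your proposal is correct and follows the paper's proof essentially step for step. For $\mathrm{CMP}\leq_\mathrm{poly}\mathrm{SMP}$ you use the diagonal embedding of the $\vec p_i$ together with the dephasing argument, and for $\mathrm{SMP}\leq_\mathrm{poly}\mathrm{rSMP}$ you take all nontrivial Paulis supported on the $R_i$ and invoke local tomographic completeness.

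Your agreement check on overlapping regions is in fact a needed refinement that the paper omits. The paper simply picks one region $i(P)$ per Pauli. Without your check, an SMP instance whose $\phi_i$ disagree on a shared Pauli would be mapped to a YES instance of rSMP; for example, take $\phi_1=|0\rangle\!\langle 0|$ and $\phi_2=|1\rangle\!\langle 1|$ on the same qubit.
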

The proof is described on Appendix~\ref{proof:CMPtorSMPreduction}. This shows that, by lower bounding the hardness of the SMP, one gets a hardness result for the reconstruction problem. This shows our claim in Eq.~\eqref{eq:complexitylowerbound}: If $\mathrm{P} \neq \mathrm{NP}$, there is no polynomial-time algorithm to solve an $\mathrm{NP}$-hard problem, and thus no way to solve the membership of the reduced stabilizer polytope in $\mathrm{poly}(n,m)$ time.

\section{Non-stabilizerness of Hamiltonian Ground states \label{sec:hamilton}}

In this section, we illustrate the power of our algorithm by numerically computing the reduced RoM of ground states of Hamiltonians, whose non-stabilizerness is recently discussed~\cite{oliviero2022magic, frau2024nonstabilizerness, timsina2025robustness}.

The general  approach will be as follows: A $n$ qubit Hamiltonian can be written as:
\begin{equation}
    H = \sum_{P \in \mathcal M^\prime}w_P P\;,
\end{equation}
where $\mathcal M^\prime$ is a set of Pauli operators and $\{w_P\}_{P \in \mathcal M^\prime}$ is a set of coefficients. Denote $\psi_\mathrm{gs}$ as a (or \emph{the}, in the unique case) corresponding ground state. In experiments and numerical simulations, one of the most natural observables is the ground state energy:
\begin{equation}
    E_0 = \langle H  \rangle_{\psi_\mathrm{gs}} = \sum_{P \in \mathcal M(H)} w_P \langle P \rangle_{\psi_\mathrm{gs}}\;.
\end{equation}
Indeed, we will assume that our measurement set is a subset $\mathcal M \subseteq \mathcal M(H)$, on the same footing as energy estimation, and is a weaker assumption when the inclusion is strict. For concreteness, we will focus with translationally-invariant, one-dimensional $k \leq 2$-local Hamiltonians~\cite{zeng2019quantum}, that have the form:
\begin{equation}
    H =\sum_{\alpha,\beta \in \{0,1,2,3\}}\sum_i J^{\alpha \beta} X^\alpha_i X^\beta_{i+1}\;,
\end{equation}
where we referred to the components of $\mathbf (X^0_i, X^1_i, X^2_i, X^3_i) =(1, X_i, Y_i, Z_i)$ in the sum. In particular, we will focus in the following family of examples:
\begin{align}
    H_\mathrm{ANNNI} &= -\sum_{i}(  Z_i Z_{i+1}- k Z_{i}Z_{i+2} + g X_i)\;,\label{eq:ANNNI}\\
    H_\mathrm{XXZ} &= \frac{1}{4} \sum_{n} \left( X_n X_{n+1} + Y_n Y_{n+1} + \Delta Z_n Z_{n+1} \right) - \frac{h}{2} \sum_{n} X_n\;,\label{eq:XXZ}
\end{align}
where all the couplings are assumed to be positive. The first Hamiltonian is referred to as the axial next-nearest neighbor Ising (ANNNI) model~\cite{selke1988annni} and the second is referred as XXZ chain (with a transverse field) ~\cite{dmitriev2002gap}. As discussed in the references above, both Hamiltonian families exhibit quantum phase transitions~\cite{sachdev2011quantum}: The ground state exhibits abrupt changes as the parameters are varied. A natural signature is to look at the energy gap:
\begin{equation}
    \Delta(H)  =\lim_{n\to \infty}(E_1-E_0)\;,
\end{equation}
where $E_1$ is the first excited state. A ground state is said to be \emph{gapped} if $\Delta(H) >0$, and \emph{gapless} if $\Delta(H) = 0$. A \emph{phase diagram} is constructed by identifying states that have their states smoothly varying with the parameters that define the Hamiltonian, although this definition can only be made mathematically sharp in the gapped case~\cite{zeng2019quantum}. Numerically, this can be probed by looking at the behaviour of observables as one takes the system size to be large.
\begin{figure}
    \centering
    \includegraphics{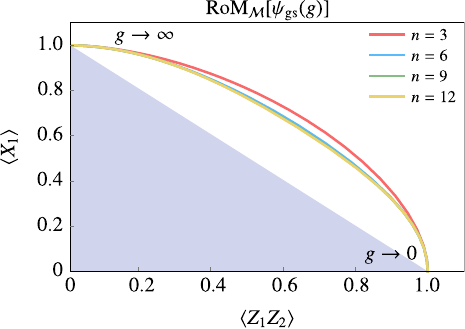}
    \caption{Parametric evolution of the ground-state expectation values and the reduced Robustness of Magic ($\text{RoM}_{\mathcal{M}}$) for the transverse-field Ising model (TFIM). The plot explicitly illustrates only the first quadrant of the observable plane. The shaded region delineates the boundary of the reduced stabilizer polytope associated with the local measurement set $\mathcal{M}=\{Z_1Z_2, X_1\}$. The solid curves denote the trajectory of the ground state, $\psi_{\text{gs}}(g)$, within the observable space $(\langle Z_1Z_2 \rangle, \langle X_1 \rangle)$ as the transverse field $g$ varies from $0$ to $\infty$ across varying system sizes ($n=3, 6, 9, 12$). As the system size $n$ increases, the maximum value of the reduced robustness concentrates around the critical point $g=1$.}
    \label{fig:Ising}
\end{figure}
Let us consider the case of the  ANNNI. For $k=0$, the corresponding Hamiltonian is known as the transverse-field Ising model (TFIM). Its nonstabilizerness was analyzed in~\cite{oliviero2022magic} by computing the stabilizer Rényi entropy (SRE)~\cite{leone2022stabilizer}. In general, it is necessary to have full tomographic information compute the SRE. It turns out that this model hosts a phase transition at $g=1$ between two gapped phases, that have stabilizer points: Notice that $\psi_\mathrm{gs}(g \to \infty) = (\ket{+}\bra{+})^{\otimes n}$ is the unique ground state at large fields, stabilized by the group $\langle X_1, X_2, \cdots, X_n\rangle$ and $\psi_\mathrm{gs}(g=0)$ is any state formed through the coherent superposition of $\ket{0}^{\otimes n}$ and $\ket{1}^{\otimes n}$, with stabilizer group $\langle Z_1 Z_2, Z_2Z_3, \cdots\rangle$. The transition is a gapless point that tunes into the phases that are adiabatically connected between the two points. The SRE follows this behaviour: In the stabilizer points, it vanishes, and acquires a maximum value in the transition between the phases that are connected with the stabilizer points.

Using our approach, we observe that this behavior can be found already with exponentially fewer measurements. Consider:
\begin{equation}
\mathcal  M= \{Z_1Z_2, X_1\}\;,
\end{equation}
which corresponds to energy measurements on the first pair of qubits. The corresponding reduced polytope is the one corresponding to the second graph in Fig. \ref{fig:smallpolytopes} (a). Solving the linear program of Eq.~\eqref{eq:reducedRoM}, we can compute $\mathrm{RoM}_\mathcal M[\psi_\mathrm{gs}(g)]$, plotted in Fig.~\ref{fig:Ising}. As $n$ increases, we can see that the maximum robustness remains around the critical value $g=1$, indicating the phase transition.

We can extend this example to the full phase diagram of the models in Eqs.~\eqref{eq:ANNNI}~and~\eqref{eq:XXZ}. In the optimal case where we measure all the $\Theta(n)$ Paulis in the Hamiltonian, meaning that we compute:
\begin{align}
    \mathcal M(H_\mathrm{ANNNI}) &\mapsto \mathrm{RoM}_{\mathcal M(H_\mathrm{ANNNI})}[\psi_\mathrm{gs}(k,g)]\;, \\
    \mathcal M(H_\mathrm{XXZ}) &\mapsto \mathrm{RoM}_{\mathcal M(H_\mathrm{XXZ})}[\psi_\mathrm{gs}(h, \Delta)]\;.
\end{align}
The results are shown in Fig.~\ref{fig:ANNNi-and-XXZ}. As one can see, the behavior of the energy gap has a similar behaviour as the nonstabilizerness as captured by the reduced RoM, where gapless regions are expected to have high nonstabilizerness and gapped phases that hosts stabilizer points, as is the case for the TFIM, has low nonstabilizerness near stabilizer regions. Although most of the phase diagrams shows nonstabilizerness, there are particular regions that it vanishes. For the ANNNI, note that the gapped region at $g=0$ has vanishing RoM. This is due to the fact that the ground state is in the code stabilized by the group generated by the $Z \otimes Z$ Pauli pairs on first and second neighbors. Similarly, for the XXZ, in the region where $h=0$ and we have the ferromagnetic coupling $\Delta \in [-2,-1]$, has stabilizer ground states. 

We stress that although the results were derived by exact diagonalization of the hamiltonians for few qubits, we could use data coming from quantum simulators/computers and more efficient classical simulation algorithms to run the same algorithm for a larger number of qubits, which is a bottleneck for other measures, as previously mentioned. For example, in \cite{timsina2025robustness}, the RoM was studied for $\leq 8$-size subsystems of quantum Gibbs states, and altough our reduced form does not correspond to the full information of the RoM, the numerical experiements above indicate that they capture much of the same information with lower computational algorithmic cost.

\begin{figure}
    \centering
    \includegraphics{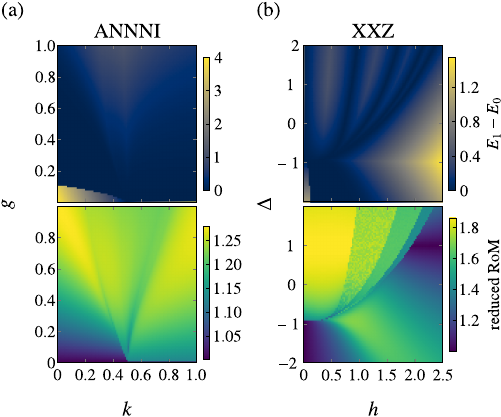}
    \caption{Comparison between the energy gap and the Reduced RoM measure for the ANNNI and XXZ models. For the ANNNI model, 10 qubits were used, while 9 qubits were used for XXZ model. Column (a) presents the results for the ANNNI model as function of the frustration $k$ and the transverse field $g$. Column (b) details the XXZ model in terms of the magnetic field $h$ and the anisotropy $\Delta$.}
    \label{fig:ANNNi-and-XXZ}
\end{figure}

\section{Conclusion and final remarks}
\label{sec:discuss}

In this work, we have shown how nonstabilizerness can be witnessed and quantified even when the number of accessible measurements is limited. By restricting stabilizer resource theory to a subset of observables, we constructed an algorithm that takes as input the expectation values of Pauli operators and decides whether they are compatible with stabilizer statistics. We demonstrated the practical performance of this approach through explicit many-body examples, and established its near-optimality from a complexity-theoretic perspective: the superpolynomial dependence on the number of measurements is not an artifact of the method, but is in fact unavoidable unless widely believed complexity-theoretic assumptions fail.

Importantly, the reduced polytope naturally defines a nonstabilizerness monotone, the reduced robustness of magic $\mathrm{RoM}_{\mathcal M}$, which lower bounds the magic content of the full state. Operationally, this immediately translates into a lower bound on the classical simulation cost: measuring only $m \leq \mathrm{poly}(n)$ Pauli observables already suffices to infer a quasiprobability sample-complexity bound. Hence, even limited measurement data can certify nontrivial classical hardness.

Several directions for future research naturally emerge. One concerns connections to quantum learning theory, in particular quantum PAC learning, where the goal is to predict expectation values drawn from a distribution using only a linear number of measurements~\cite{aaronson2007learnability}. It was shown in~\cite{rocchetto2017stabiliser} that stabilizer states are efficiently PAC-learnable, in contrast to generic quantum states. An interesting open question is how this learnability behavior extends to families of states with bounded nonstabilizerness, and whether the tools developed here can help characterize the sample complexity as a function of the magic content. Another promising direction is to relate our framework to recent advances on quantum correlation sets of Pauli observables~\cite{xu2024bounding,xu2025simultaneous}, which are likewise governed by algebraic properties of the underlying frustration graph. Understanding these connections may yield a more unified geometric and combinatorial picture of restricted quantum correlations.

Finally, it is natural to explore potential implications for fault-tolerant quantum computation, particularly in the context of magic-state injection~\cite{bravyi2005universal}. Since the robustness of magic (RoM) provides quantitative bounds on resources in unitary synthesis~\cite{howard2017application}, it would be interesting to investigate whether our reduced-polytope approach can lead to new resource-estimation or certification protocols tailored to experimentally accessible observables.

Overall, our results offer a new perspective on the geometry of nonstabilizerness and provide a practical route toward its characterization in large-scale quantum systems. More broadly, they suggest that the complexity inherent in quantum resource theories need not be confronted in its full exponential form, but can instead be meaningfully controlled by projecting the problem onto experimentally and computationally accessible subspaces.

\begin{acknowledgments}
We acknowledge funding from Coordenação de Aperfeiçoamento de Pessoal de Nível Superior – Brasil (CAPES) – Finance Code 001, the Simons Foundation (Grant No. 1023171, R.C.), the Brazilian National Council for Scientific and Technological Development (CNPq, Grants No.403181/2024-0, and 301687/2025-0), the Financiadora de Estudos e Projetos (Grant No. 1699/24 IIF-FINEP),  a guest professorship from the Otto M\o nsted Foundation,  the Danish National Research Foundation grant bigQ (DNRF 142) and VILLUM FONDEN through a research grant (40864). We thank the High-Performance Computing Center (NPAD) at UFRN for providing computational resources. We also thank the Qiskit team for their SDK, which was utilized in our simulations.
\end{acknowledgments}

\bibliography{refs}

\appendix
\onecolumngrid

\section{Short proofs and aiding Propositions}

\noindent 
A result that we use a lot is the following proposition:

\begin{prop}
    Let $\mathcal S \subseteq \mathcal P_n$ a stabilizer group of generic rank, and let $\psi^\mathcal S = 2^{-n}\sum_{s \in \mathcal S}s$ be the stabilizer projector. Then, for any $P \in \mathcal P_n$:
    \begin{equation}
       \langle P \rangle_{\psi^\mathcal S} = \tr(P \psi^\mathcal S) = (P, \mathcal S)\equiv  1[P \in \mathcal S]- 1[-P \in \mathcal S]\;, 
    \end{equation}
    where $1[\cdot]$ is the indicator function. Note that $(P, \mathcal S) \in \{-1,0,1\}$.
    \label{prop:paulistab}
\end{prop}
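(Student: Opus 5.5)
The plan is to expand the projector $\psi^{\mathcal S}=2^{-n}\sum_{s\in\mathcal S}s$ inside the trace and use trace orthogonality of Pauli operators. By linearity,
\begin{equation}
\tr(P\psi^{\mathcal S}) = 2^{-n}\sum_{s\in\mathcal S}\tr(Ps)\;,
\end{equation}
so it suffices to evaluate $\tr(Ps)$ for each element $s$ of the stabilizer group. Here $P$ is taken Hermitian, as is implicit in calling $\tr(P\psi^{\mathcal S})$ an expectation value, so $P^2=\iden$. The case of general rank causes no trouble. The normalization $2^{-n}$ already makes $\tr\psi^{\mathcal S}=1$, because only $s=\iden$ contributes to the trace, so the same formula applies whatever the rank of $\mathcal S$.

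The key fact I will invoke is standard. For any two Pauli operators $P,s\in\mathcal P_n$, the product $Ps$ is again a Pauli operator, say $Ps=\lambda Q$ with $Q$ an unsigned Pauli string and $\lambda\in\{\pm1,\pm i\}$. Its trace is $2^n\lambda$ if $Q=\iden$ and $0$ otherwise, since every non-identity single-qubit Pauli is traceless and the trace factorizes over tensor products. Therefore $\tr(Ps)\neq0$ only when $s\propto P$, that is, only when $s\in\{\pm P,\pm iP\}$.

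Next I will use the structure of $\mathcal S$ to pin down which of these candidates can actually occur. Because $\mathcal S$ is abelian with $-\iden\notin\mathcal S$, every $s\in\mathcal S$ satisfies $s^2=\iden$: the element $s^2$ lies in $\mathcal S$ and equals $\pm\iden$, and $-\iden$ is excluded. Since $(\pm iP)^2=-P^2=-\iden$, the elements $\pm iP$ cannot lie in $\mathcal S$. So only $s=P$ and $s=-P$ can contribute, giving $\tr(P\cdot P)=2^n$ and $\tr(P\cdot(-P))=-2^n$ respectively. Moreover, $P$ and $-P$ cannot both belong to $\mathcal S$, since otherwise their product $-P^2=-\iden$ would be in $\mathcal S$.

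Summing up, we obtain $\tr(P\psi^{\mathcal S})=1[P\in\mathcal S]-1[-P\in\mathcal S]$, and at most one of the two indicators is nonzero, so the value lies in $\{-1,0,1\}$. There is no real obstacle in this argument. The only point requiring care is the phase bookkeeping: excluding the imaginary multiples $\pm iP$ and the simultaneous membership of $\pm P$, both of which follow from the condition $-\iden\notin\mathcal S$.
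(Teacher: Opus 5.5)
Your proof is correct and follows the same route as the paper: expand $\psi^{\mathcal S}=2^{-n}\sum_{s\in\mathcal S}s$ inside the trace and use trace orthogonality of Pauli strings. Your explicit phase bookkeeping makes rigorous what the paper's loose ``$\tr(PQ)=\delta_{PQ}2^n$'' glosses over. You rule out $\pm iP\in\mathcal S$ and simultaneous $\pm P\in\mathcal S$, and you restrict to Hermitian $P$; that restriction is genuinely needed, since for $\mathcal S=\{\iden,X\}$ and $P=iX$ the trace equals $i$ while the right-hand side is $0$.
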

\begin{proof}
    It follows by the definition of $\psi^\mathcal S$ and the Hilbert-Schmidt orthogonality of Pauli matrices, which imposes $\tr(PQ) =\delta_{PQ}2^n$ for all $P, Q \in \mathcal P_n$:
    \begin{equation}
        \tr(P \psi^\mathcal S) = \frac{1}{2^{n}}\sum_{s \in \mathcal S} \tr(Ps) = \sum_{s \in \mathcal S}(\delta_{P,s}-\delta_{-P,s})\;,
    \end{equation}
from which the results follows by recognizing $1[\pm P \in \mathcal S] = \sum_{s \in \mathcal S}(\delta_{P,s}-\delta_{-P,s})$.
\end{proof}

This result implies the concentration of Pauli strings of stabilizer pseudomixtures/mixtures: Let $\psi =\sum_\mathcal S x_\mathcal S \psi^\mathcal S$, and let $P$ be a Pauli string. Then, by the Proposition above:
\begin{equation}
    \langle P\rangle_\psi = \sum_{\mathcal S \in \mathrm{stab}_n}x_\mathcal S (P, \mathcal S) = \sum_{\substack{\mathcal S \in \mathrm{stab}_n\\ P \in \pm \mathcal S}}x_\mathcal S (P, \mathcal S)\;,
    \label{eq:pauliexpecpseudo}
\end{equation}
where now the sum is restricted to stabilizer groups that include either $P$ or $-P$.

Yet another result is the 1-norm bound on changing $V$-representations of a polytope:

\begin{prop}
    Let $P \subseteq [-1,1]^N$ be a polytope that can be written as the convex hull of two sets:
    \begin{equation}
        P = \mathrm{conv}(\mathbf {u}_1, \mathbf {u}_2, \cdots, \mathbf {u}_{k_1}) = \mathrm{conv}(\mathbf {v}_1, \mathbf {v}_2, \cdots, \mathbf {v}_{k_2})\;,
    \end{equation}
    furthermore, assume that $\{\mathbf {u}_1, \mathbf {u}_2, \cdots, \mathbf {u}_{k_1}\}\subset \mathrm{conv}(\mathbf {v}_1, \mathbf {v}_2, \cdots, \mathbf{v}_{k_2})$. Then, given $\mathbf v = \sum_{i=1}^{k_1}\alpha_i \mathbf u_i = \sum_{j=1}^{k_2}\beta_j \mathbf v_j \in P$, we have:
    \begin{equation}
        \|\boldsymbol{\beta}\|_1=\sum_{j=1}^{k_2}|\beta_j|\; \leq \|\boldsymbol{\alpha}\|_1=\sum_{i=1}^{k_1}|\alpha_i|.
    \end{equation}
    \label{prop:1norminequality}
\end{prop}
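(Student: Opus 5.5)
The proof is a direct substitution argument: rewrite each $\mathbf u_i$ in the $\mathbf v$-basis and apply the triangle inequality. Before starting, the statement needs a precise reading. The coefficients $\boldsymbol\beta$ in $\mathbf v=\sum_j\beta_j\mathbf v_j$ are in general not unique, because the $\mathbf v_j$ need not be affinely independent. So the inequality cannot hold for \emph{every} such $\boldsymbol\beta$. I will prove it in the form in which it is used, namely for reduced-RoM-type minimizations. That is, for every decomposition $\boldsymbol\alpha$ there exists a decomposition $\boldsymbol\beta$ in terms of the $\mathbf v_j$ with $\|\boldsymbol\beta\|_1\le\|\boldsymbol\alpha\|_1$ and $\sum_j\beta_j=\sum_i\alpha_i$. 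Equivalently, the minimal $\ell_1$-norm over $\mathbf v$-decompositions is at most that over $\mathbf u$-decompositions.

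First, I use the hypothesis $\{\mathbf u_i\}\subset\mathrm{conv}(\mathbf v_1,\dots,\mathbf v_{k_2})$. For each $i$ this gives a stochastic row $\lambda_{ij}\ge 0$ with $\sum_{j}\lambda_{ij}=1$ and $\mathbf u_i=\sum_j\lambda_{ij}\mathbf v_j$. Substituting into $\mathbf v=\sum_i\alpha_i\mathbf u_i$ and exchanging the finite sums gives $\mathbf v=\sum_j\beta_j\mathbf v_j$ with
\begin{equation}
\beta_j\equiv\sum_{i=1}^{k_1}\alpha_i\lambda_{ij}\;.
\end{equation}

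Second, I check the two required properties of this $\boldsymbol\beta$. The affine constraint is preserved, since $\sum_j\beta_j=\sum_i\alpha_i\sum_j\lambda_{ij}=\sum_i\alpha_i$. For the norm bound, the triangle inequality together with $\lambda_{ij}\ge0$ gives $|\beta_j|\le\sum_i|\alpha_i|\lambda_{ij}$. Summing over $j$ and using that each row of $\lambda$ sums to one yields $\|\boldsymbol\beta\|_1\le\sum_i|\alpha_i|\sum_j\lambda_{ij}=\|\boldsymbol\alpha\|_1$.

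Finally, I take $\boldsymbol\alpha$ to be a minimizer over $\mathbf u$-decompositions. The inequality above then shows that the minimum of $\|\boldsymbol\beta\|_1$ over $\mathbf v$-decompositions is bounded by that of $\|\boldsymbol\alpha\|_1$. This is the comparison needed later, for instance to relate the top-down representation $\Pi_\mathcal M(\mathrm{stab}_n)$ to the compressed one $\mathrm{stab}(\mathcal M)$ and so obtain property~4 of Lemma~\ref{lem:reducedROMproperties}.

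There is no real technical obstacle here. The only delicate point is the quantifier on $\boldsymbol\beta$, which I handle by stating the result as an existence/minimization claim rather than for arbitrary $\boldsymbol\beta$.
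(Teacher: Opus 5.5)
Your proof is correct and uses the same argument as the paper: write each $\mathbf u_i=\sum_j\lambda_{ij}\mathbf v_j$ with stochastic rows, set $\beta_j=\sum_i\lambda_{ij}\alpha_i$, and conclude by the triangle inequality. Your quantifier caveat is right. The paper's ``it follows that $\beta_j=\sum_i q^i_j\alpha_i$'' is really a \emph{choice} of $\boldsymbol\beta$, and the literal claim fails for non-unique $\boldsymbol\beta$ (e.g.\ $0=5\cdot(-1)-9\cdot 0+5\cdot 1$ in $[-1,1]=\mathrm{conv}(-1,0,1)=\mathrm{conv}(-1,1)$). The existence/minimization form you prove, together with $\sum_j\beta_j=\sum_i\alpha_i$, is exactly what Lemma~\ref{lem:reducedROMproperties} requires.
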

\begin{proof}
Given that the $\mathbf u$'s are in the convex hull of the $\mathbf v$'s:
\begin{equation}
    \mathbf {u}_i = \sum_{j=1}^{k_2}q^i_j\mathbf{v}_j \quad ;\quad \mathbf q^i \in \Delta_{k_2} \;,\;\forall i \in [k_1]\;.
\end{equation}
It follows that $\beta_j = \sum_{i=1}^{k_1} q^i_j \alpha_i$. A straightforward application of the triangle inequality finalizes the proof:
\begin{equation}
    \|\boldsymbol{\beta}\|_1 = \sum_{j=1}^{k_2}|\beta_j| \leq \sum_{i=1}^{k_1}\sum_{j=1}^{k_2} q^i_j |\alpha_i| =\|\boldsymbol{\alpha}\|_1\;,
\end{equation}
where on the last line we used that $\sum_{j=1}^{k_2}q^i_j=1$ for all $i \in [k_1]$ and the definition of the 1-norm of $\boldsymbol{\alpha}.$
\end{proof}

\subsection{Proof of Lemma \ref{lem:weakdepstab}\label{proof:weakdepstab}}
\noindent
Let us denote $\mathcal M =\{P_i\}_{i=1}^m$ and $\mathcal M \otimes \mathbb I =\{P_i \otimes \mathbb I\}_{i=1}^m$. Let $\Pi_\mathcal M\vec{\psi}\in \mathrm{STAB}_n(\mathcal M)$. Then:
\begin{equation}
    \Pi_\mathcal M\vec{\psi}= \sum_{\mathcal S \in \mathrm{stab}_n} p_\mathcal S \Pi_\mathcal M\vec{\psi^\mathcal S}\;,
\end{equation}
where $\Pi_\mathcal M\vec{\psi^\mathcal S} = (\langle P_1 \rangle_{\psi^\mathcal S}, \langle P_2 \rangle_{\psi^\mathcal S}, \cdots, \langle P_m \rangle_{\psi^\mathcal S} )$. By tensoring with the identity, we construct the stabilizer group $\mathcal S \otimes \mathbb I$ by tensoring all Paulis in $\mathcal S$ with $\mathbb I_{n^\prime-n}$, which is not full rank. Two facts hold:
\begin{align}
    \psi^{\mathcal S \otimes \mathbb I} &= \frac{1}{2^n}\sum_{s\otimes \mathbb I_{n^\prime-n} \in \mathcal S\otimes \mathbb I}s\otimes \mathbb I_{n^\prime-n} \in \mathrm{STAB}_{n^\prime}\;, \label{eq:notfullrankstab}\\
    \langle P_i\rangle_{\psi^\mathcal S} &= \langle P_i\otimes \mathbb I_{n^\prime-n}\rangle_{\psi^{\mathcal S \otimes \mathbb I}} \;,\; \forall i \in \{1,2,\cdots, m\}\;. \label{eq:paulipromote}
\end{align}
Eq.~\eqref{eq:paulipromote} follows from Eq.~\eqref{eq:notfullrankstab}, and the first equation follows from the fact that $\psi^{\mathcal S \otimes \mathbb I}$ is the (proportionally normalized) projector into the subspace of states in $\mathbb C^{2\otimes n^\prime}$, which can be written as uniform distribution over its pure stabilizer basis~\cite{gottesman2024surviving}. By Eq.~\eqref{eq:paulipromote}, it follows that $\Pi_\mathcal M \vec{\psi^\mathcal S} =  \Pi_\mathcal M \vec{\psi}^{\mathcal S\otimes \mathbb I}$, thus showing that $\Pi_\mathcal M \vec{\psi} \in \mathrm{STAB}_{n^\prime}(\mathcal M\otimes \mathbb I)$, and therefore, $\mathrm{STAB}_n(\mathcal M) \subseteq \mathrm{STAB}_{n^\prime}(\mathcal M \otimes \mathbb I)$.

Now, consider the other way around. Let $\Pi_\mathcal M \vec{\psi^\prime} \in \mathrm{STAB}_n(\mathcal M \otimes 
 \mathbb I)$. Then, it admits a decomposition:
 \begin{equation}
     \Pi_\mathcal M \vec{\psi^\prime} =\sum_{\mathcal S^\prime \in \mathrm{stab}_{n^\prime}} p_{\mathcal S^\prime} \Pi_\mathcal M\vec{\psi^{\mathcal S^\prime}}\;.
 \end{equation}
Now, consider a tensor splitting $\mathbb C^{2\otimes n^\prime} \cong \mathbb C^{2\otimes n}\otimes \mathbb C^{2\otimes (n^\prime-n)}$. For any $\mathcal S^\prime \in \mathrm{stab}_{n^\prime}$, we can construct the following stabilizer group:
\begin{equation}
    \mathcal S^\prime|_{n} = \{s^\prime \in \mathcal S^\prime |s^\prime =s \otimes \mathbb I_{n^\prime-n}\} \in \mathrm{stab}_n\;,
\end{equation}
meaning that it acts non-trivially just on the first tensor factor. Note that imposing this support condition on $s^\prime \in \mathcal S^\prime$ is the same as imposing $\tr_{n^\prime-n}(s) \neq 0$, due to Paulis being traceless. Finally, we note that:
\begin{equation}
    \langle P_i \otimes \mathbb I_{n^\prime-n}\rangle_{\psi^{\mathcal S^\prime}} = \frac{1}{2^n}\sum_{s^\prime \in \mathcal S^\prime} \tr(P_i \otimes \mathbb I_{n^\prime-n}s^\prime) = \frac{1}{2^n}\sum_{s \in \mathcal S^\prime|_n} \tr(P_i s) =\langle P_i \rangle_{\psi^{\mathcal S^\prime|_n}}\;,
\end{equation}
since, for stabilizers in $\mathcal S^\prime \backslash \mathcal S^\prime|_{n}$ will have vanishing contribution. Therefore, this shows that $\Pi_{\mathcal M} \vec{\psi}^{\mathcal S^\prime} = \Pi_{\mathcal M} \vec{\psi}^{\mathcal S^\prime|_n}$, showing the inclusion $\Pi_\mathcal M \psi^\prime \in \mathrm{STAB}_n(\mathcal M)$, and therefore, $\mathrm{STAB}_{n^\prime}(\mathcal M \otimes \mathbb I)\subseteq \mathrm{STAB}_n(\mathcal M)$. $\square$

\subsection{Proof of Lemma \ref{lem:reducedROMproperties}\label{proof:reducedromproperties}}
\noindent 
Let us go one-by-one:
\begin{enumerate}
    \item The inequality follows as, given any $\mathbf x \in \mathbb R^{|\mathrm{stab}(\mathcal M)|}$,
    \begin{equation}
        \|\mathbf x\|_1 = \sum_{j=1}^{|\mathrm{stab}(\mathcal M)|}|x_j| \geq \left |\sum_{j=1}^{|\mathrm{stab}(\mathcal M)|} x_j\right|=1\;,
    \end{equation}
    by the constraint. If there is a stabilizer state $\phi \in \mathrm{STAB}_n$ agreeing with $\psi$ on all Paulis of $\mathcal M$, it follows that $\mathrm{RoM}_\mathcal M(\psi) =\mathrm{RoM}_\mathcal M(\phi)=\|\mathbf x\|_1=\sum_{j} x_j=1$, since $\phi$ is a stabilizer mixture. On the other hand, note that the reverse triangle inequality above is tight iff all $x_j\geq 0$. However, if that is the case, note that we have:
    \begin{equation}
        \Pi_\mathcal M \vec{\psi} =\Pi _\mathcal M\left(\sum_{j=1}^{|\mathrm{stab}(\mathcal M)|} \sum_{\mathcal S \in \mathrm{stab}_n}x_j p^j_\mathcal S \vec{\psi}^\mathcal S\right) \equiv \Pi_\mathcal M \vec{\phi}\;,
    \end{equation}
    where we defined $\phi =\sum_{j, \mathcal S}x_j p^j_\mathcal S \psi^\mathcal S \in \mathrm{STAB}_n$, since $\mathbf x \in \Delta_{|\mathrm{stab}(\mathcal M)|}$ and $\mathbf p^j \in \Delta_{|\mathrm{stab}_n|}$ for all $j \in [|\mathrm{stab}(\mathcal M)|]$.
    \item If $\mathcal E$ is a TP stabilizer operation:
    \begin{equation}
        \mathcal E(\psi^\mathcal S) =\sum_{\mathcal S^\prime \in \mathrm{stab}_n}p^\mathcal S_{\mathcal S^\prime} \psi^{\mathcal S^\prime} \quad, \quad \mathbf p^\mathcal S \in \Delta_{|\mathrm{stab}_n|}\;.
    \end{equation}
    Hence, given a stabilizer pseudomixture of $\psi = \sum_{\mathcal S \in \mathrm{stab}_n} x_\mathcal S \psi^\mathcal S$, we have that $\Pi_\mathcal M\vec{\mathcal E(\psi)}=\sum_{\mathcal S, \mathcal S^\prime} p^\mathcal S_{\mathcal S^\prime}x_\mathcal S \Pi_\mathcal M \vec{\psi}^{\mathcal S^\prime}$\;, thus, using that the $\mathcal M$-correlations of pure stabilizer states are in $ \mathrm{conv}(\mathrm{stab}(\mathcal M))$, we have, by the bound of Proposition \ref{prop:1norminequality}:
    \begin{equation}
        \mathrm{RoM}_\mathcal M(\mathcal E(\psi)) \leq \sum_{\mathcal S, \mathcal S^\prime \in \mathrm{stab}_n} |x_\mathcal S| p^\mathcal S_{\mathcal S^\prime} =\|\mathbf x\|_1\;,
    \end{equation}
    which, upon minimizing over $\mathbf x$, arrive at $\mathrm{RoM}_\mathcal M (\mathcal E(\psi)) \leq \mathrm{RoM}_\mathcal M(\psi)$.
    \item Submultiplicative follows if one writes the stabilizer pseudomixture of the parts:
    \begin{equation}
        \psi  =\sum_{\mathcal S \in \mathrm{stab}_n}x_\mathcal S \psi^\mathcal S \quad ; \quad \rho = \sum_{\mathcal S \in \mathrm{stab}_n}y_\mathcal S \psi^\mathcal S\;,
     \end{equation}
     and noting that $\psi \otimes \rho = \sum_{\mathcal S, \mathcal S^\prime }x_\mathcal S y_{\mathcal S^\prime} \psi^\mathcal S \otimes \psi^{\mathcal S^\prime}$ defines itself a pseudomixture of the product. Then, again applying Proposition \ref{prop:1norminequality} to $\Pi_\mathcal M\vec{\psi\otimes \rho} = \sum_{\mathcal S, \mathcal S^\prime} x_\mathcal S y_{\mathcal S^\prime}\Pi_\mathcal M\vec{\psi^\mathcal S \otimes \psi^{\mathcal S^\prime}}$, we arrive at:
     \begin{equation}
         \mathrm{RoM}_\mathcal M(\psi \otimes \rho) \leq \sum_{\mathcal S, \mathcal S^\prime}|x_\mathcal S||y_{\mathcal S^\prime}| =\|\mathbf x\|_1 \|\mathbf y\|_1\;,
     \end{equation}
     which follows when optimizing over $\mathbf x$ and $\mathbf y$ in the rhs. The proof for convexity follows by the same strategy: consider $\Pi_\mathcal M \vec{\psi}_j=\sum_\mathcal S x_{\mathcal S, j} \Pi_\mathcal M \vec{\psi}^\mathcal S$, which will give an upper bound:
     \begin{equation}
         \mathrm{RoM}_\mathcal M\left( \sum_j p_j \psi_j\right) \leq \left\|\sum_j p_j \mathbf x_j\right\|_1 \leq \sum_j |p_j | \|\mathbf x_j\|_1\;,
     \end{equation}
     yielding $\mathrm{RoM}_\mathcal M(\sum_j p_j \psi_j) \leq \sum_j |p_j| \mathrm{RoM}_\mathcal M(\psi_j)$ upon optimization.
     \item The bound follows from the fact that we have two mixture decompositions:
     \begin{equation}
         \Pi_\mathcal M \vec{\psi}  = \sum_{j=1}^{|\mathrm{stab}(\mathcal M)|} x_j \mathbf v_j =\sum_{\mathcal S \in \mathrm{stab}_n } x_\mathcal S \Pi_\mathcal M \vec{\psi}^\mathcal S\;,
     \end{equation}
     with $\mathbf v_j \in \mathrm{conv}\{\Pi_\mathcal M \vec{\psi}^\mathcal S\}_{\mathcal S \in \mathrm{stab}_n}$, for all $j$. Therefore, by Proposition \ref{prop:1norminequality}, all such decompositions satisfy:
     \begin{equation}
        \sum_{j=1}^{|\mathrm{stab}(\mathcal M)|}|x_j| \leq \sum_{\mathcal S \in \mathrm{stab}_n}|x_\mathcal S| \quad \Rightarrow \quad \mathrm{RoM}_\mathcal M(\psi) \leq \mathrm{RoM}(\psi)\;.
     \end{equation}
\end{enumerate}
$\square$

\section{Proof of Theorem \ref{thm:reducedstabdecomp} \label{proof:reducedstabdecomp}}
First, we will show the relation to the maximally commuting subsets defined on Eq.\eqref{CCS}, and then do the graph-theoretic interpretation. By Eq.~\eqref{eq:pauliexpecpseudo}, we know that for any given stabilizer state $\phi = \sum_{\mathcal S \in \mathrm{stab}_n}p_\mathcal S \psi^\mathcal S \in \mathrm{STAB}_n$, its projection has the form:
\begin{equation}
    \Pi_\mathcal M \vec{\phi} = \sum_{\substack{\mathcal S \in \mathrm{stab}_n\\\mathcal S \cap \pm \mathcal M \neq \varnothing}}p_\mathcal S \Pi_\mathcal M \vec{\psi}^\mathcal S\;,
    \label{eq:measurementstabmixture}
\end{equation}
since $\Pi_\mathcal M \vec{\psi}^{\mathcal S: \mathcal S\cap \pm \mathcal M =\varnothing} = \vec{0}$. Now, let us define $\mathcal S_1\sim\mathcal S_2$ in $\mathrm{stab}_n$ iff $\mathcal S_1 \cap \pm \mathcal M = \mathcal S_2 \cap \pm \mathcal M$; it is easy to see that this is an equivalence relation. Then, due to Proposition \ref{prop:paulistab}, it follows that:
\begin{equation}
    \langle P_i \rangle_{\psi^{\mathcal S_1}} = \langle P_i \rangle_{\psi^{\mathcal S_2}} \;,\; \forall i \in [m] \quad \Rightarrow \quad \Pi_\mathcal M \vec{\psi}^{\mathcal S_1} = \Pi_\mathcal M \vec{\psi}^{\mathcal S_2}\;.
\end{equation}
Thus, the distinct elements on the convex decomposition of Eq.~\eqref{eq:measurementstabmixture} is for stabilizer groups that have distinct intersections with the measurement set. We can rewrite Eq.~\eqref{eq:measurementstabmixture} as:
\begin{equation}
    \Pi_\mathcal M \vec{\phi} = \sum_{[\mathcal{S}] \in \mathrm{stab}_n|_\mathcal M }p_{[\mathcal{S}]} \vec{v}_{[\mathcal{S}]}\;,\; \vec{p}\in \Delta_{|\mathrm{stab}_n|_\mathcal M|}\;,
    \label{eq:convstabM}
\end{equation}
where $\mathrm{stab}_n|_\mathcal M\equiv \left(\mathrm{stab}_n/\sim \right) \setminus[\varnothing]$ is the set of all equivalence classes except the stabilizers with vanishing intersection with $\mathcal{M}$, $p_{[\mathcal{S}]}=\sum_{\mathcal{S}^\prime\in [\mathcal{S}]}p_{\mathcal{S}^\prime}$  and the corresponding vector is:
    \begin{equation}
    (\vec  v_{[\mathcal{S}]})_i =  
    \begin{cases}
1 \; \mathrm{if\;} P_i \in \mathcal{S}\;,\\
-1 \; \mathrm{if\;}-P_i \in \mathcal{S}\;,\\
0\;\mathrm{otherwise.}
    \end{cases}
    \end{equation}

\textbf{Claim:} The correspondence $[\mathcal{S}]\in\mathrm{stab}_n|_\mathcal  M\mapsto \mathcal{S}\cap\pm\mathcal{M}\in\mathrm{C}_\mathrm{max}(\mathcal M)$ is a bijection. 

First, by consequence of the equivalence relation, it is well-defined; since $\mathcal{S}$ is a stabilizer group, we have $-\mathds{1}\notin\mathcal{S}\cap\pm\mathcal{M}$ and since $\mathcal{S}$ is full rank, $\mathcal{S}\cap\pm\mathcal{M}$ is maximal, i.e., $\mathcal{S}\cap\pm\mathcal{M}\in\mathrm{C}_\mathrm{max}(\mathcal M)$.

Given any $Q\in\mathrm{C}_\mathrm{max}(\mathcal M)$, since it is a commuting set of Paulis, we can always find a maximal rank stabilizer group $\mathcal{S}\supseteq Q$. For any choice of $\mathcal{S}$, we have $\mathcal{S}\cap\pm\mathcal{M}=Q$; if this were not the case, say, $\mathcal{S}\cap\pm\mathcal{M}\supsetneq Q$, we could take some $P\in \mathcal{S}\cap\pm\mathcal{M}\setminus Q$ and add it to $Q$, rendering $Q$ not maximal. Thus, the inverse map is well-defined.

By abuse of notation, we can identify $\vec{v}_Q=\vec{v}_\mathcal{S}$ via the aforementioned bijection. Therefore by Eq.~\eqref{eq:convstabM}:
\begin{equation}
    \mathrm{STAB}(\mathcal M) =\mathrm{conv}\{\vec{v}_Q\}_{Q \in \mathrm C_{\mathrm{max}}(\mathcal M)}\;.
    \label{eq:reducedstabcommuting}
\end{equation}

We can further strengthen this claim: Each $\vec{v}_Q$ is actually an extremal point of $\mathrm{STAB}(\mathcal M)$, and thus $\mathrm{stab}_n|_\mathcal M$ also correspond to extremal vertices.

Suppose that $\vec{v}_Q=\sum_{Q^\prime}p_{Q^\prime} \vec{v}_{Q^\prime}$ with $p_{Q^\prime}>0$. We claim that $Q\subseteq Q^\prime$ for all $Q^\prime$ in the sum. If there was some $Q\supsetneq Q^\prime$, we could choose some $P_i\in Q^\prime\setminus Q$, implying $(\vec{v}_Q)_i=0< (\vec{v}_{Q^\prime})_i$, which is absurd.

Then $Q\subseteq Q^\prime$ for all $Q^\prime$, and by maximality, all $Q^\prime =Q$ and the convex combination is trivial.

We now construct a graph theoretic presentation of $\mathrm{C}_\mathrm{max}(\mathcal M)$ based on $G_\mathcal M$. First, we claim that:
\begin{equation}
    I(G_\mathcal M)=\{ S \subseteq \mathcal M \;|\;[P_i,P_j]=0,\;\forall P_i, P_j \in S\}\;,
\end{equation}

where we denoted $I(G_\mathcal M)$ as the set of independent subsets of $G_\mathcal M$. Graph theoretically, $S \in I(G_\mathcal M)$ iff its elements do not share an edge in $E(G_\mathcal M)$. For the frustration graph $G_\mathcal M$, this happens iff all the Paulis commute, since a pair of Paulis either commute or anti-commute, showing the algebraic presentation above. We say that $S \in I(G_\mathcal M)$ is also \emph{maximal} ($S \in I_\mathrm{max}(G_\mathcal M)$) if:
\begin{equation}
    \forall S^\prime\in I(G_\mathcal{M}),\; S\subseteq S^\prime\Rightarrow S^\prime =S \;.
\end{equation}

\textbf{Claim:} There is a correspondence $Q\in\mathrm{C}_\mathrm{max}(\mathcal M)\longleftrightarrow (S \in I_\mathrm{max}, f \in \mathrm B_S)$, for $S\in\mathrm{max}(G_\mathcal M)$ and $f\in \mathrm B_S$\footnote{Cf. Eq.~\eqref{admissiblef}.}.

In fact, consider $Q \in \mathrm{C}_\mathrm{max}(\mathcal M)$. Since every $P\in Q$ satifies either $P\in\mathcal{M}$ or $-P\in\mathcal{M}$, we can define $f_Q \in \{\pm\}^{Q}$ with elements:
    \begin{equation}
    (f_{Q})_P =  
    \begin{cases}
1 \; \mathrm{if\;} P \in \mathcal M\;,\\
-1 \; \mathrm{if\;}P \in -\mathcal M\;.
    \end{cases}\;\Rightarrow\;
    (\vec v_Q)_i =    
    \begin{cases}
(f_Q)_{P_i} \; \mathrm{if\;} \pm P_i \in Q\;,\\
0 \; \mathrm{otherwise}\;.
    \end{cases}
    \label{eq:Qvectorsigned}
    \end{equation} 
Thus, there is a subset $S_Q \subseteq \mathcal M$ such that we can write:
\begin{equation}
    Q =\{(f_Q)_P P\}_{P \in S_Q}\;.
\end{equation}
We note that:
\begin{itemize}
    \item $f_Q \in \mathrm B_{S_Q}$, since $-1 \notin \langle Q\rangle$ imposes that, for every $S^\prime \subseteq S_Q$:
    \begin{equation}
        \prod_{P \in S^\prime}(f_Q)_P P \neq -1\;.
    \end{equation}
    \item $S_Q \in I_\mathrm{max}(G_\mathcal M)$: We know that $S_Q \in I(G_{\mathcal M})$, since $Q$ is commuting. Furthermore, if there were $S^\prime\in I(G_{\mathcal M})$ with $S\subsetneq S^\prime$, we could add $P_i\in S^\prime\setminus S$ to $Q$, contradicting its maximality.
\end{itemize}
So, there is a map $Q \mapsto (S,f)$, for $S \in I_\mathrm{max}(G_\mathcal M)$ and $f \in \mathrm B_S$. We also denote $\vec v_{S_Q,f_Q} =\vec{v}_Q$.

We show that we can also reconstruct an element of $\mathrm C(\mathcal M)$ from a pair $(S \in I_\mathrm{max}, f \in \mathrm B_S)$. There are two possibilities:
\begin{itemize}
    \item $\mathrm B_S \neq \varnothing$: Then, note that, given $f \in \mathrm B_S$, we can construct:
    \begin{equation}
        Q_{(S,f)} \equiv \{f_P  P\}_{P \in S} \subseteq \pm \mathcal M\;.
    \end{equation}
    Since $f$ satisfies the Pauli product condition, we know that $-1 \notin \langle Q_{(S,f)}\rangle$. Furthermore, since $S$ is maximal, we know that:
    \begin{equation}
        \nexists P^\prime \in \mathcal M \backslash S \;:\;[P^\prime, S]=0\;,
    \end{equation}
    since otherwise $S \subsetneq S\cup\{P^\prime\} \in I(G_\mathcal M)$, contradicting the maximality of $S$. Note that this condition is robust to signs, since if two Paulis $P, P^\prime$ do commute, so does $[s_1 P, s_2P^\prime] = 0$ for all $s_1, s_2 \in \{\pm\}$. We then conclude that:
    \begin{equation}
        \nexists P^\prime \in \pm \mathcal M \backslash Q_{(S,f)} \;:\;[P^\prime, Q_{(S,f)}]=0\;.
    \end{equation}
    Then, $s \in \langle Q_{(S,f)}\rangle \cap \pm \mathcal M \backslash Q_{(S,f)} \subseteq \pm \mathcal M\backslash Q_{(S,f)}$ cannot be commuting with $Q_{(S,f)}$, and thus $Q_{(S,f)}\in\mathrm{C}_\mathrm{max}(\mathcal{M})$. We define on this case:
    \begin{equation}
        \vec{v}_{S,f}\equiv \vec{v}_{Q_{(S,f)}}
        \label{eq:Sfvectorsigned}
    \end{equation}.
    \item $\mathrm B_S =\varnothing$: Then there is no consistent stabilizer group among the signed Pauli elements of $S \in I_\mathrm{max}(G_\mathcal M)$. We call those cases trivial.
\end{itemize}
Hence, the set of non-trivial $(S \in I_\mathrm{max}(G_\mathcal M), f \in \mathrm B_S)$ are in bijection with $\mathrm C(\mathcal M)$. Eqs.~\eqref{eq:Qvectorsigned}, \eqref{eq:Sfvectorsigned} define $\vec{v}_{(S,f)}$ with:
\begin{equation}
(\vec v_{(S,f)})_i =    
\begin{cases}
f_{P_i} \; \mathrm{if\;} f_{P_i}P_i \in S\\
0 \; \mathrm{otherwise}\;.
\end{cases}\;,
\end{equation} 
and by Eq.~\eqref{eq:reducedstabcommuting} we have the main claim:
\begin{equation}
    \mathrm{STAB}(\mathcal M) =\mathrm{conv}\{\vec{v}_Q\}_{Q \in \mathrm C(\mathcal M)} =\mathrm{conv}\{\vec v_{S,f}\}_{S \in I_\mathrm{max}(G_\mathcal M)\;,\;f \in \mathrm B_S}\;,
\end{equation}
where both lists are of extremal vertices.

We now move to the size bound. The simplest upper bound one can take for the number of consistent $(S,f)$ pairs is by the following argument; any commuting set of $n$-qubit Paulis can be extended to a full stabilizer group, which is generated by some set of $n$ Paulis, $\{P_1,\cdots,P_n\}$. Then we are free to choose only $f_S(P_i)\in\{\pm1\}$, as the other Paulis are all products of some combination of the generators. This accounts for at most $2^n$ possible sign assignments of the extended stabilizer group. On the other hand, the number all sign assignments is $|\{\pm1\}^S|=2^{|S|}$. Then we can write:
\begin{equation}
    |\mathrm{stab}(\mathcal M)| =\sum_{S \in I_\mathrm{max}(G_\mathcal M)}|\mathrm B_S| \leq \sum_{S \in I_\mathrm{max}(G_\mathcal M)}2^{\min\{n,|S|\}}\leq 2^{\min\{n,s_{\mathrm{max}}\}}|I_\mathrm{max}(G_\mathcal M)| \;,
    \label{eq:sizebound}
\end{equation}
where 
\begin{equation}
    s_\mathrm{max} = \max_{S \in I_\mathrm{max}(G_\mathcal M)}|S|\;
    \label{eq:Ssize}
\end{equation}

For the term $|I_\mathrm{max}(G_\mathcal M)|$ we also have two separate upper bounds: first, for any graph of $m$ vertices, we know~\cite{moon1965cliques}:
\begin{equation}
    |I_\mathrm{max}(G_\mathcal M)|\leq 3^{m/3}\;.
\end{equation}
On the other hand, as implied by the arguments in the appendix $\ref{proof:Vrepalgo}$, every maximal commuting subset $S$ accounts for at least $2$ points in $\mathrm{stab}(\mathcal M)$. Then, it is bounded by the number of stabilizer groups, $2^{(1/2+o(1))n^2}$~\cite{aaronson2004improved}, accounting for the combined bound
\begin{equation}
    |I_\mathrm{max}(G_\mathcal M)|\leq 2^{\min\{cm,(1/2+o(1))n^2\}}\;.
    \label{eq:maxindepsetsize}
\end{equation}

and thus arriving at $|\mathrm{stab}(\mathcal M)| \leq 2^{\min\{n,m\}+\min\{cm,(1/2+o(1))n^2\}}$ with $c=\log 3/3$. For $m=\Omega(n^2)$, this just reduces to the loose bound $|\mathrm{stab}(\mathcal M)| \leq |\mathrm{stab}_n|$; however, for $m=O(n)$ we have $|\mathrm{stab}(\mathcal M)| = 2^{O(m)}$, dropping the explicit dependence on $n$.


Tightness of $\mathcal M = \{X_1,Y_1, Z_1, X_2, \cdots, Z_n\}$ follows by recognizing that the corresponding frustration graph is given as:
\begin{equation}
    G_\mathcal M =\underbrace{K_3 \sqcup K_3 \sqcup \cdots \sqcup K_3}_{m/3\mathrm{\;times}}\;,
\end{equation}
with $K_3$ being the completely connected graph of 3 vertices. Then, every independent set $S \in \mathrm{I}_\mathrm{max}(G_\mathcal M)$ can be written as: 
\begin{equation}
    S = \{S_1, S_2,\cdots, S_{m/3}\}\;,
\end{equation}
with $S_i \in \{X_i,Y_i, Z_i\}$ being one of the three independent sets of each $K_3$. Hence, we have indeed that $|I_\mathrm{max}(G_\mathcal M)|=3^{m/3}$. Furthermore, note that, for any $S^\prime \subseteq S$, their corresponding Pauli product is also independent, meaning that $\prod_{P \in S^\prime}P \neq \pm 1$, which implies that any $f \in \{\pm\}^{S}$ is also in $\mathrm B_\mathrm S$. Thus, $\mathrm B_S =\{\pm\}^{m/3}=\{\pm\}^{n}$ for any $S \in I_\mathrm{max}(G_\mathcal M)$, and we have that by Eq.~\eqref{eq:sizebound},
\begin{equation}
    |\mathrm{stab}(\mathcal M)| =\sum_{S \in I_\mathrm{max}(G_\mathcal M)}|\mathrm B_S| =|I_\mathrm{max}(G_\mathcal M)| 2^{n} = 2^{n+(\log 3/3)m}\;.
\end{equation}

\section{Proof of Theorem \ref{thm:Vrepalgo} \label{proof:Vrepalgo}}
We will present the proof by expanding the steps outlined in Alg. \ref{alg:Vrep}. In order to do the detailed analysis, we need to do some preliminaries on Pauli strings. A specific classical  representation of $P \in \mathcal P_n$ is in term of its \emph{symplectic representation}~\cite{gottesman1997stabilizer, gottesman1998heisenberg}, on which we write:
\begin{equation}
    P = \lambda X^\mathbf a Z^\mathbf b\;,
\end{equation}
where $\lambda \in \{\pm 1, \pm i\}$ and $\mathbf a, \mathbf b \in \mathbb F_2^n$, and we have used the shorthand notation:
\begin{equation}
X^\mathbf a\equiv \bigotimes_{i=1}^n X^{a_i} \quad, \quad Z^\mathbf b  \equiv \bigotimes_{i=1}^n Z^{b_i}\;.
\end{equation}
Considering $P_1 =\lambda_1 X^{\mathbf a_1} Z^{\mathbf b_1}$ and $P_2 =\lambda_2 X^{\mathbf a_2}Z^{\mathbf b_2}$, they satisfy:
\begin{equation}
    P_1 P_2 = (-1)^{\Omega_{1,2}} P_2 P_1\;,
\end{equation}
where $\Omega_{1,2}  = \mathbf a_1 \cdot \mathbf b_2+ \mathbf a_2 \cdot \mathbf b_1$ is referred as the symplectic form of the vectors $(\mathbf a_1,\mathbf b_1)^T \in \mathbb F_2^{2n}$ and $(\mathbf a_2,\mathbf b_2)^T \in \mathbb F_2^{2n}$.

Note that storing this bit representation of the Pauli, $P \mapsto \{\lambda , \mathbf a , \mathbf b\}$ requires $O(n)$ bits, and Pauli multiplication, obtained by considering $\{\lambda_1, \mathbf a_1, \mathbf b_1\}$ and $\{\lambda_2, \mathbf a_2, \mathbf b_2\}$, one can verify that $P_1 P_2$ is given by the representation $\{\lambda_1 \lambda_2(-1)^{\mathbf b_1\cdot \mathbf a_2}, \mathbf a_1 + \mathbf a_2, \mathbf b_1 + \mathbf b_2\}$, which requires $O(n)$ runtime, and so does checking commutativity of two Pauli strings, since it can be obtained by addition modulo 2 of the representation of the constituents. We will assume that every single Pauli is stored and manipulated using the symplectic representation. This includes the input, which is given as the following length $m$ array:
\begin{equation}
\mathcal M = \{P_1, P_2, \cdots, P_m\} \mapsto \{(\lambda_1, \mathbf a_1, \mathbf b_1),(\lambda_2, \mathbf a_2, \mathbf b_2), \cdots,  (\lambda_m, \mathbf a_m, \mathbf b_m)\}\;,
\end{equation}
stored in $O(mn)$ bits. We now go through the four steps:
\begin{enumerate}
    \item We compute the frustration graph datum by computing its adjacency matrix, which turns out to be the symplectic form itself, which, for $i,j \in [m]$, is given as:
    \begin{equation}
        \Omega_{i,j} \equiv \mathbf a_i\cdot \mathbf b_j+ \mathbf b_i \cdot\mathbf a_j \in \{0,1\}\;,   
    \end{equation}
    all of its elements can be computed on $O(m^2 \times n) = O(m^2 n)$ runtime;
    \item Equipped with a classical representation of the graph $G_\mathcal M$ by $(\mathcal M, [\Omega_{i,j}]_{i,j \in [m]})$, where $\mathcal M$ is the list of symplectic representations of the Paulis and the edges are specified by the adjacency matrix. It is known that there is an output-senstive algorithm~\cite{tsukiyama1977anew} that runs in $O(m^3 |I_\mathrm{max}(G_\mathcal M)|)$ runtime and $O(m^2)$ process space to compute $I_\mathrm{max}(G_\mathcal M)$;
     \item Given $S \in I_\mathrm{max}(G_\mathcal M)$, we will find every admissible sign assignment $ f \in \mathrm B_S$. Remember that $B_S$ is the set of all sign assignments that does not incur any product of the type $\prod_{P\in S^\prime}f_P P=-\mathds{1}$, for all subsets $S^\prime\subset S$.
\end{enumerate}
First, each $S\subseteq \mathcal M$ is specified as the list of symplectic representations, as:
    \begin{equation}
        S =\{(\lambda_{S,j}, \mathbf a_{S,j}, \mathbf b_{S,j})\}_{j =1}^{|S|}\;,
    \end{equation}
    referring to $(\lambda_{S,j}, \mathbf a_{S,j}, \mathbf b_{S,j})$ as the symplectic representation of $P_{S,j}$. We can construct the following matrix:
    \begin{equation}
        \mathbf M_S =
        \begin{pmatrix}
            \mathbf a_{S,1} & \mathbf a_{S,2} & \cdots & \mathbf a_{S, |S|}\\
            \mathbf b_{S,1} & \mathbf b_{S,2} & \cdots & \mathbf b_{S, |S|}
        \end{pmatrix} \in \mathbb F_2^{2n\times|S|}\;.
    \end{equation}
    Our algorithm is based on the following fact:
    
    \textbf{Claim: } There is a correspondence $\mathbf c \in \mathrm{ker}[\mathbf M_S] \subseteq \mathbb F_2^{|S|} \longleftrightarrow S^\prime \subseteq S\;:\;\prod_{P \in S^\prime }P=\pm\mathds{1}$.

    To see why, note that by definition, 
    \begin{equation}
        \mathbf c \in \mathrm{ker}[\mathbf M_S] \Leftrightarrow (\mathbf M_S)\cdot \mathbf c=\mathbf 0 \Leftrightarrow \sum_{i=1}^{|S|} \mathbf c_i \begin{pmatrix}
            \mathbf a_{S,i}\\
            \mathbf b_{S,i}
        \end{pmatrix} = \mathbf 0 \in \mathbb F_2^{2n} \;.
    \end{equation}
    Note that the lhs corresponds to the binary vector of the symplectic representation of $\prod_{j=1}^{|S|}P_i^{\mathbf c_j}$, and the rhs is the representation of the identity, modulo a $\{\pm1,\pm i\}$ phase. However, since a product of commuting Paulis can not square to $-\mathds{1}$, the phase is restricted to $\pm 1$. Thus:
    \begin{equation}
        \sum_{i=1}^{|S|} \mathbf c_i \begin{pmatrix}
            \mathbf a_{S,i}\\
            \mathbf b_{S,i}
        \end{pmatrix} = \mathbf 0 \Leftrightarrow \prod_{j=1}^{|S|}P_j^{\mathbf c_j}  =\prod_{P \in S^\prime}P =\pm\mathds{1}\;,
    \end{equation}
    where we defined $S^\prime \equiv \{P_j \in S\;|\;\mathbf c_j=1\}$. Consider now $S^\prime \subseteq S$ satisfying the constraint. Define the vector $\mathbf c \in \mathbb F_2^{|S|}$ by:
    \begin{equation}
        \mathbf c_j = 1[P_j \in S^\prime] \quad, \quad \forall j \in [|S|]\;,
    \end{equation}
    where the $j$th component is 1 if $P_j \in S^\prime$ and zero otherwise. Then:
    \begin{equation}
        \prod_{P \in S^\prime} P =\prod_{j=1}^{|S|}P_j^{\mathbf c_j} =\pm\mathds{1}\;.
    \end{equation}
    The binary vectors of the corresponding symplectic representation of both sides satisfies:
    \begin{equation}
        \sum_{i=1}^{|S|} \mathbf c_i \begin{pmatrix}
            \mathbf a_{S,i}\\
            \mathbf b_{S,i}
        \end{pmatrix} = \mathbf 0 \Leftrightarrow \mathbf c \in \mathrm{ker}[\mathbf M_S]\;,
    \end{equation}
    showing the claim.

    By performing Gaussian elimination, in $O(\min(n, |S|) n|S|)$ time~\cite{CLRS4}, we obtain a list of spanning vectors of its kernel:
    \begin{equation}
        \mathbf K=\begin{pmatrix}
            \mathbf c_1\\
            \mathbf c_2\\
            \cdots\\
            \mathbf c_D
        \end{pmatrix} \in \mathbb F^{D \times |S|}_2
        \label{eq:spankernel}
    \end{equation}
    where $D=\mathrm{dim} \mathrm{ker}[\mathbf M_S]$. 

    By the claim above, we can define a bitstring $\boldsymbol{\sigma} \in \mathbb F_2^{D}$ such that $\prod_{j=1}^{|S|}P_j^{[\bm{c}_i]_j} = (-1)^{\boldsymbol{\sigma}_i}\mathds{1}$ for all $i \in [D]$, where we denote $[\mathbf c_i]_j$ to refer to the $j$th component of the $i$th spanning vector of Eq.~\eqref{eq:spankernel}. A second claim is the following fact:
    
    \textbf{Claim: }Let $ f \in \{\pm\}^S$ be some sign assignment for each Pauli in $S$. Then, $f \in \mathrm B_S$ iff:
    \begin{equation}
        \mathbf K \cdot \mathbf x = \boldsymbol{\sigma}\;,
    \end{equation}
    where $f_i = (-1)^{x_i}$, defining $\mathbf x \in \mathbb F_2^{|S|}$. 
    
    To see why, note that membership in $\mathrm B_S$ is guaranteed if:
    \begin{equation}
        \prod_{P \in S^\prime} f_P P \neq -\mathds 1 \quad, \quad \forall S^\prime \subseteq S \quad \Leftrightarrow \quad \prod_{P \in S^\prime} f_P\mathds{1}  \neq -\prod_{P \in S^\prime} P \quad, \quad \forall S^\prime \subseteq S\;.
    \end{equation}
    Also note that we only need to check this condition for subsets $S^\prime$ such that $\prod_{P \in S^\prime} P \propto \mathds 1$, since it is automatically satisfied otherwise. By the correspondence between subsets with trivial product and the kernel of $\mathbf M_S$, this corresponds to checking:
    \begin{equation}
        \prod_{j=1}^{|S|} f_j^{\mathbf c_j}\mathds{1}  \neq -\prod_{j=1}^{|S|} P_j^{\mathbf c_j} \quad, \quad \forall \mathbf c \in \mathrm{ker}[\mathbf M_S] \quad \Leftrightarrow \quad \prod_{j=1}^{|S|} f_j^{[\mathbf c_i]_j}  \mathds{1}\neq -\prod_{j=1}^{|S|} P_j^{[\mathbf c_i]_j} \quad, \quad \forall i \in [D]\;,
    \end{equation}
    where we have used the fact that the rows of Eq.~\eqref{eq:spankernel} spans the kernel; for . In other words, the constraint must be:
    \begin{equation}
        \prod_{j=1}^{|S|}f_j^{[\mathbf c_i]_j} = \boldsymbol{\sigma}_i\quad, \quad \forall i \in [D]\;,
    \end{equation}
    since the lhs is restricted to $\{\pm1\}$. By the definitions of $\mathbf x, \boldsymbol{\sigma}$, the $D$ equations above corresponds to:
    \begin{equation}
        (-1)^{\sum_{j=1}^{|S|}[\mathbf c_i]_j x_j} = (-1)^{\sigma_i} \quad, \quad \forall i \in [D] \quad \Leftrightarrow \quad \mathbf K \cdot \mathbf x =\boldsymbol{\sigma}\;,
    \end{equation}
    as claimed.

    As a consequence, we can count the number of solutions. First, note that if the kernel $\mathrm{ker}[\mathbf M_S]$ is empty, i.e., there is no Pauli product proportional to the identity, then \emph{all} sign assignments $f\in\{\pm\}^S$ are admissible, accounting for $|B_S|=2^{|S|}$. If the kernel is not empty, then $K$ is full rank and consequently it is surjective. Given a particular solution $\mathbf x_p: \mathbf K \cdot\mathbf x_p=\boldsymbol{\sigma}$, the full set of solutions can be written as:
    \begin{equation}
        \mathrm B_S  \cong \mathbf x_p + \mathrm{ker}[\mathbf K] = \{\mathbf x \in \mathbb F_2^{|S|}\;:\; \mathbf x =\mathbf x_p+\mathbf k\;,\;\mathbf k \in \mathrm{ker}[\mathbf K]\}\;,
    \end{equation}
    where the isomorphism is between sets by the relation $\{f_j = (-1)^{x_j}\}_j$. Hence, $|\mathrm B_S| =2^{\mathrm{dim}\mathrm{ker}[\mathbf K]} = 2^{|S|-\mathrm{rank}[\mathbf K]}$, by the rank-nullity theorem of $\mathbf K \in \mathbb F_2^{D \times |S|}$. Since the row-space of $\mathbf K$ is $\mathrm{ker}[\mathbf M_S]$, we have that $|\mathrm B_S| = 2^{|S|-D}$. The rank-nullity theorem for $\mathbf M_S \in \mathbb F_2^{2n \times |S|}$ implies that $|S|-D=\mathrm{rank}[\mathbf M_S]$. We know that $\mathrm{rank}[\mathbf M_S] \leq |S|$, since $\mathbf M_S$ has $|S|$ columns. Furthermore, its column space forms a isotropic subspace of the symplectic vector space $(\mathbb F_2^{2n}, \Omega)$, defined as a subspace with vanishing symplectic form. It is known that such a subspace must have dimension $\leq n$~\cite{gottesman1997stabilizer}, yielding $\mathrm{rank}[\mathbf M_S] \leq n$. Combining the bounds, we get that:
    \begin{equation}
        |\mathrm B_S| =2^{|S|-D} =2^{\mathrm{rank}[\mathbf M_S]} \leq 2^{\min(n, |S|)}\;.
    \end{equation}
    Given $\mathbf K$ obtained through Gaussian elimination, we can now present the algorithm to append all the consistent $\{\vec{v}_{S,f}\}$: (1) Compute $\boldsymbol{\sigma}$ by evaluating the set $\{s_i = \prod_{j=1}^{|S|} P_j^{[\mathbf c_i]_j}\}_{i=1}^{D}$, that can be done in $O(D|S|n)  = O(|S|^2 n)$ time, (2) Perform Gaussian elimination on $\mathbf K$ to obtain the spanning set of vectors $\mathrm{ker}[\mathbf K] = \mathrm{span}_{\mathbb F_2} \{\mathbf c^\prime_1, \mathbf c^\prime_2, \cdots, \mathbf c^\prime_{|S|-D}\}$, that takes time $O(\min(D, |S|) D|S|) = O(|S|^3)$ and (3) for every $\mathbf c^\prime \in \mathrm{ker}[\mathbf K]$, compute and append the vector $\vec v_{S,f}$ in Eq.~\eqref{eq:consistentvector}, that takes $O(2^{|S|-D}\times m) = O(2^{\min(n,|S|)} m)$ time
    \footnote{On all runtime evaluations, we have used that $D \leq |S|$.}. Summing all up:
    \begin{align}
        &\underbrace{O(\min(n, |S|) n |S|) }_{\mathrm{Gaussian\;elimination\;to\;obtain\;}\mathbf K}+ \underbrace{O(|S|^2 n)}_{\mathrm{Computing\;the\;signs\;}\{s_i\}_i} +\underbrace{O(|S|^3)}_{\mathrm{Gaussian\;elimination\;in\;}\mathbf K} + \underbrace{O( 2^{\min(n, |S|)} m)}_{\mathrm{Computing \;the\;kernel\;and\;appending}}\\
        &=O(2^{\min(n, m)}m + m^3+m^2n + \min(m,n)nm)\;,
    \end{align}
    where on the last line we have used that $|S| \leq m$.


Combining all the three steps, we have our algorithm: (1) Compute the frustration graph, (2) list all the maximally independent subsets, and (3) construct all the vectors coming from consistent sign assignments of the independent subsets. Thus, we have, for the runtime:
\begin{align}
    T &= \underbrace{O(m^2 n)}_{\mathrm{computing\;} G_\mathcal M} + \underbrace{O(m^3|I_\mathrm{max}(G_\mathcal M)|)}_{\mathrm{computing\;independent\;subsets}} + \underbrace{|I_\mathrm{max}(G_\mathcal M)| \times O(2^{\min(n, m)}m + m^3+m^2n + \min(m,n)nm)}_{\mathrm{computing\;consistent\;{vectors\;for\;each\;} } S \in I_\mathrm{max}(G_\mathcal M)  }\;,
\end{align}
where we bounded $|S| \leq s_\mathrm{max}$, defined on Eq.~\eqref{eq:Ssize}. This is the full runtime depending on the frustration graph properties. To have an independent upper bound, we use that $s_\mathrm{max} \leq m$ and the bound in Eq.~\eqref{eq:maxindepsetsize}. Note that, for $n \to \infty$, and $m \leq \mathrm{poly}(n)$, the first term that builds the frustration graph is polynomial and can be dropped in the asymptotic runtime, since the other two are exponential. Furthermore, the last contribution has an extra exponential dependency coming from the sign assignments, making it the single dominant asymptotic term. We thus arrive at our claim:
\begin{equation}
    T = O(m2^{(\min(m,n) + \min(cm,(1/2+o(1))n^2)}+(m^3+m^2n+\min(m,n)nm)2^{\min(cm,(1/2+o(1))n^2)})\;,
\end{equation}
where $c=\log(3)/3$.

There are some interesting regimes: for a constant number of measurements, $m=O(1)$, the algorithm runs in linear time $T=O(n)$. If we allow logarithmic many measurements $m=O(\log{n})$, the time is still polynomial $T=O(\log(n)n^{1+\log 3/3})$. However, if we consider a quadratic (or larger) number of measurements $m=\Omega (n^2)$, then the bottleneck of the algorithm becomes the possibly exponentially many maximal commuting subsets, and the time scales with the size of $|\mathrm {stab}_n |$, giving $T=O(p(m,n)2^{O(n^2)})$.

$\square$
\section{Proof of Theorem \ref{thm:CMPtorSMPreduction} \label{proof:CMPtorSMPreduction}}
Let us first properly define the CMP as a decision problem:
\begin{defn}
    We define the classical marginal problem (CMP) on $n$ bits as the problem defined by:
    \begin{itemize}
        \item \textbf{Input: }A set of probability distributions $\{\vec{p}_{1}, \vec{p}_{2},\cdots, \vec{p}_{N}\}$, with $N \leq \mathrm{poly}(n)$, $\vec p_{i} \in \mathrm{\Delta}_{2^{|R_i|}}$, 
        where $R_i \subseteq[n]\;,\; |R_i| \leq k=O(1)$ for all $i \in [N]$;
        \item \textbf{Decision: }Output YES if exists a $n$ bit probability distribution $\vec{p} \in \Delta_{2^n}$ such that:
        \begin{equation}
            \sum_{\bar{\mathbf x} \in \mathbb F_2^{n-|R_i|}}p(\mathbf x, \bar{\mathbf x}) = p_i(\mathbf x)\;,\;\forall i \in [N]\;,
        \end{equation}
        and NO otherwise.
    \end{itemize}
    \label{def:classmargprob}
\end{defn}
We can now move to the two reductions:
\begin{enumerate}
    \item $\mathrm{CMP} \leq_\mathrm{poly} \mathrm{SMP}$: Given the inputs $\{\vec{p}_1, \vec{p}_2, \cdots, \vec{p}_N\}$, note that we can compute the following set of states
    \begin{equation}
        \left\{\phi_i\equiv \sum_{\mathbf x \in\mathbb F_2^{|R_i|}}p_i(\mathbf x )\ket{\mathbf x }\bra{\mathbf x} \right\}_{i=1}^N
    \end{equation}
    in $O(1)\times N \leq \mathrm{poly}(n)$ time. Since all the computational basis states are stabilizers, $\phi_i \in \mathrm{STAB}_{|R_i|}$, for all $i$, making it a valid input for the SMP. We claim that:
    \begin{equation}
        \exists \phi \in \mathrm{STAB}_n\;:\;\tr_{\bar{R}_i}\phi = \phi_i, \;\forall i \in [N] \Leftrightarrow \exists \vec{p} \in \Delta_{2^n}\;:\; \sum_{\bar{\mathbf x} \in \mathbb F_2^{n-|R_i|}}p(\mathbf x, \bar{\mathbf x}) =p_i(\mathbf x), \;\forall i \in [N]\;.
    \end{equation}
    \begin{itemize}
        \item ($\Leftarrow$): If there is a global probability distribution $\vec{p} \in \Delta_{2^n}$, note that we can construct:
        \begin{equation}
            \phi \equiv \sum_{\mathbf x \in \mathbb F_2^n } p(\mathbf x) \ket{\mathbf x}\bra{\mathbf x} \in \mathrm{STAB}_n\;,
        \end{equation}
        And then, given any $i \in [N]$:
        \begin{equation}
            \tr_{\bar R_i}\phi = \sum_{\mathbf x \in \mathbb F_2^{|R_i|}} \left( \sum_{\bar{\mathbf x} \in \mathbb F_2^{n-|R_i|}}p(\mathbf x, \bar{\mathbf x})\right)\ket{\mathbf x }\bra{\mathbf x} =\sum_{\mathbf x \in\mathbb F_2^{|R_i|}}p_i(\mathbf x )\ket{\mathbf x }\bra{\mathbf x} = \phi_i\;,
        \end{equation}
        since it correctly marginalizes to $\vec{p}_i$.
        \item ($\Rightarrow$) If there is a global stabilizer state $\phi \in \mathrm{STAB}_n$, consider the action of the completely dephasing channel on the computational basis, $\mathcal D(\cdot)$, that yields:
        \begin{equation}
            \mathcal D(\phi)=\sum_{\mathbf x \in \mathbb F_2^n}\bra{\mathbf x}\phi \ket{\mathbf x} \ket{\mathbf x }\bra{\mathbf x} \equiv \sum_{\mathbf x \in \mathbb F_2^n}p(\mathbf x) \ket{\mathbf x }\bra{\mathbf x}\;.
        \end{equation}
        We claim that $\vec p= (p(\mathbf x ) )_{\mathbf x \in \mathbb F_2^n}$ is our desired global probability distribution. From the commutativity of the dephasing channel with partial traces $\mathcal D[\tr_X(\cdot)] =\tr_X\mathcal D(\cdot)$, we have:
        \begin{equation}
            \mathcal D(\phi_i) = \tr_{\bar R_i}\mathcal D(\phi)\;,\;\forall i \in [N] \Leftrightarrow \sum_{\mathbf x\in \mathbb F_2^{|R_i|}} p_i(\mathbf x) \ket{\mathbf x}\bra{\mathbf x}=\sum_{\mathbf x \in \mathbb F_2^{|R_i|}} \left( \sum_{\bar{\mathbf x} \in \mathbb F_2^{n-|R_i|}}p(\mathbf x, \bar{\mathbf x})\right)\ket{\mathbf x }\bra{\mathbf x}\;,\;\forall i \in [N]\;,
        \end{equation}
        which is only true if $\vec{p}$ indeed is a YES instance of the CMP.
    \end{itemize}
    Hence, we have shown that under poly-time preprocessing, the decision of Def.~\ref{def:classmargprob} reduces to Def.~\ref{def:stabmargprob}.
    \item $\mathrm{SMP} \leq_\mathrm{poly} \mathrm{rSMP}$: Consider:
\begin{equation}
    \mathcal M_{1,\cdots, N} \equiv \bigcup_{i=1}^N \tilde{ \mathcal P}_{R_i}\;,
    \label{eq:tomographicdecomp}
\end{equation}
where $\tilde{\mathcal P}_{R_i}$ are all the non-trivial $4^{|R_i|}-1$ Pauli strings supported on $R_i$. Note that, furthermore:
\begin{equation}
    |\mathcal M_{1,\cdots,N}| \leq\sum_{i =1}^N(4^{|R_i|}-1) \leq N (4^k-1)
\end{equation}
is polynomially bounded in $n$, since $N \leq \mathrm{poly}(n)$ and $k=O(1)$. Since each $P$ is supported on some $R_i$, we can always compute its expectation value on some reduced density matrix, say $\phi_{i(P)}$. Then, we can define:
\begin{equation}
    \vec V_{1,\cdots, N} \equiv (\tr(P\phi_{i(P)}))_{P \in \mathcal M_{1, \cdots, N}}\;.
\end{equation}

Each expectation value can be computed in $O(1)$ time, and hence $\vec V_{1, \cdots,N}$ can be obtained in $\mathrm{poly}(n)$ time. We claim that $(\mathcal M_{1,\cdots, N}, \vec{V}_{1, \cdots, N})$ are the desired inputs for the reduced stabilizer membership problem: Let us show that
\begin{equation}
    \vec V_{1,\cdots, N} \in \mathrm{STAB}(\mathcal  M_{1,\cdots, N})\; \Leftrightarrow \; \exists \phi \in \mathrm{STAB}_n:\tr_{\bar{R}_i}\phi =\phi_i, \;\forall i \in [N]\;.
\end{equation}
\begin{itemize}
    \item ($\Leftarrow$): If there is a global stabilizer state that marginalizes correctly to the $\{\phi_i\}$, it follows that $\vec V_{1,\cdots, N} = \Pi_\mathcal M \vec{\phi} \in \mathrm{STAB}(\mathcal M)$;
    \item ($\Rightarrow$) If the vector lies in the reduced polytope, there is a $\phi \in \mathrm{STAB}_n$ such that $\vec V_{1, \cdots, N} =\Pi_\mathcal M \vec{\phi}$. However, note that, by definition on Eq.~\eqref{eq:tomographicdecomp}, $\mathcal  M_{1,\cdots, N}$ contains a tomographic complete set of Paulis on all $\{R_i\}$, which fixes all the $N$ reduced density matrices of the state $\phi$ to be exactly $\{\phi_i\}$. 
\end{itemize}
Again, we have shown that the decision of Defs. \ref{def:stabreducedprob} can be made with classical poly-time by considering Def. \ref{def:stabmargprob}. 
\end{enumerate}
$\square$

\end{document}